\documentclass[UKenglish]{lipics_arxiv}
\title{Computing the Similarity Between Moving Curves}
\author{Kevin Buchin}
\author{Tim Ophelders}
\author{Bettina Speckmann}
\affil{Eindhoven University of Technology\\
  Eindhoven, the Netherlands\\
  \texttt{[k.a.buchin|t.a.e.ophelders|b.speckmann]@tue.nl}}
\authorrunning{K. Buchin, T. Ophelders and B. Speckmann}
\Copyright{Kevin Buchin, Tim Ophelders and Bettina Speckmann}
\subjclass{I.3.5 Computational Geometry and Object Modeling}
\keywords{Fr\'echet distance, distance between surfaces, complex moving objects, moving curves}

\usepackage{microtype}
\usepackage{color}

\usepackage{amsmath,amssymb,amsthm}
\newtheorem{thm}{Theorem}
\newtheorem{lem}[thm]{Lemma}
\newtheorem{cor}[thm]{Corollary}
\newtheorem*{rmk}{Remark}

\usepackage{graphicx}
\usepackage{wrapfig}
\usepackage{xspace}

\usepackage{pdfsync}

\newcommand{\tikzDir}{./tikz}
\newcommand{\imgDir}{./img}

\newcommand{\tIff}{if and only if\xspace}

\renewcommand{\implies}{\ensuremath\Rightarrow}
\renewcommand{\iff}{\ensuremath\Leftrightarrow}
\newcommand{\real}{\mathbb{R}}
\newcommand{\nat}{\mathbb{N}}

\newcommand{\e}{\varepsilon}
\newcommand{\F}{\mathcal F}
\newcommand{\FD}{Fr\'e\-chet distance\xspace}
\newcommand{\mtime}{\textit{time}}

\newcommand{\Rtrans}{\ensuremath{\mathrel{\text{\raisebox{2.5pt}{~\circle{6}}$\mkern-14.2mu\triangleright$}}}\xspace}
\newcommand{\inRtrans}[2]{#1\Rtrans#2}
\newcommand{\notinRtrans}[2]{\ensuremath{\neg(#1\Rtrans#2)}}

\newcommand{\R}{\ensuremath\triangleright\xspace}

\newcommand{\inR}[2]{#1\R#2}

\newcommand{\fsdII}{\ensuremath{\F^\text{2D}_\e}\xspace}
\newcommand{\fsdIII}{\ensuremath{\F_\e}\xspace}

\newcommand{\E}[1]{\ensuremath{E_{#1}^\text{d}}\xspace}

\newcommand{\matchII}{identity\xspace}
\newcommand{\matchIC}{synchronous constant\xspace}
\newcommand{\matchID}{synchronous dynamic\xspace}
\newcommand{\matchCC}{asynchronous constant\xspace}
\newcommand{\matchCD}{asynchronous dynamic\xspace}
\newcommand{\matchDD}{orientation-preserving\xspace}
\newcommand{\MatchII}{Identity\xspace}
\newcommand{\MatchIC}{Synchronous Constant\xspace}
\newcommand{\MatchID}{Synchronous Dynamic\xspace}
\newcommand{\MatchCC}{Asynchronous Constant\xspace}
\newcommand{\MatchCD}{Asynchronous Dynamic\xspace}
\newcommand{\MatchDD}{Orientation-Preserving\xspace}

\newcommand{\homII}{\includegraphics{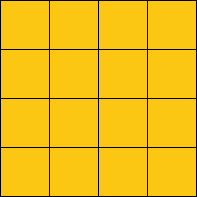}}
\newcommand{\homIC}{\includegraphics{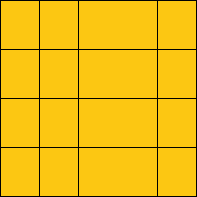}}
\newcommand{\homID}{\includegraphics{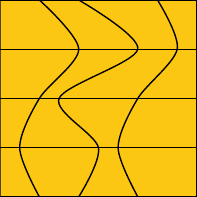}}

\newcommand{\homCC}{\includegraphics{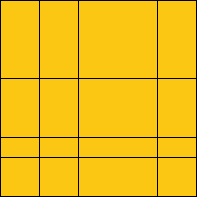}}
\newcommand{\homCD}{\includegraphics{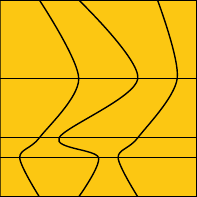}}

\newcommand{\homDD}{\includegraphics{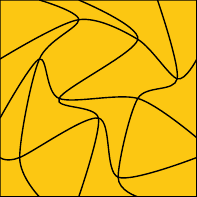}}

\newcommand{\myparNS}[1]{\noindent{\sffamily\bfseries #1.}}
\newcommand{\mypar}[1]{\medskip\myparNS{#1}}

\setlength{\intextsep}{6pt}
\setlength{\abovecaptionskip}{5pt}

\clubpenalty=9996
\widowpenalty=9999
\brokenpenalty=4991
\predisplaypenalty=10000
\postdisplaypenalty=1549
\displaywidowpenalty=1602

\begin{document}
	\maketitle
	\begin{abstract}
		In this paper we study similarity measures for moving curves which can, for example, model changing coastlines or retreating glacier termini.
		Points on a moving curve have two parameters, namely the position along the curve as well as time. We therefore focus on similarity measures for surfaces, specifically the~\FD between surfaces.
		While the~\FD between surfaces is not even known to be computable, we show for variants arising in the context of moving curves that they are polynomial-time solvable or NP-complete depending on the restrictions imposed on how the moving curves are matched. We achieve the polynomial-time solutions by a novel approach for computing a surface in the so-called free-space diagram based on max-flow min-cut duality.
	\end{abstract}

\section{Introduction}

Over the past years the availability of devices that can be used to track moving objects has increased dramatically, leading to an explosive growth in movement data. Naturally the goal is not only to track objects but also to extract information from the resulting data. Consequently recent years have seen a significant increase in the development of methods extracting knowledge from moving object data.

Tracking an object gives rise to data describing its movement. Often the scale at which the tracking takes place is such that the tracked objects can be viewed as point objects. Cars driving on a highway, birds foraging for food, or humans walking in a pedestrian zone: for many analysis tasks it is sufficient to consider objects as moving points. Hence the most common data sets used in movement data processing are so-called trajectories: sequences of time-stamped points.	

However, not all moving objects can be reasonably represented as points.
A hurricane can be represented by the position of its eye, but a more accurate description is as a 2-dimensional region which represents the hurricane’s extent.
When studying shifting coastlines, reducing the coastline to a point is obviously unwanted: one is actually interested in how the whole coast line moves and changes shape over time.
The same holds true when studying the terminus of a glacier. In such cases, the moving object is best represented as a polyline rather than by a single point.
In this paper we hence go beyond the basic setting of moving point objects and study moving complex, non-point objects. Specifically, we focus on similarity measures for moving curves, based on the \FD.

\mypar{Definitions and Notation}
The \FD is a well-studied distance measure for shapes, and is commonly used to determine the similarity between two curves~$A$ and~$B:[0,1]\rightarrow\real^n$. A natural generalization 
to more complex shapes uses the definition of~Equation~\ref{eq:fdX} where the shapes~$A$ and~$B$ have type~$X\rightarrow\real^n$.
\begin{equation}\label{eq:fdX}
    D_{\text{fd}}(A,B) = \inf_{\mu:X\rightarrow X}\sup_{x\in X}\|A(x)-B(\mu(x))\|
\end{equation}

\begin{wrapfigure}{r}{0pt}%
	\includegraphics{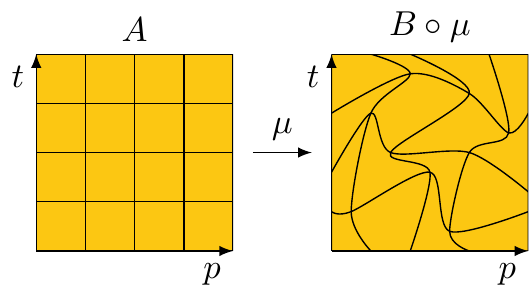}%
	\caption{A matching~$\mu$ between surfaces~$A$ and~$B$ drawn as a homeomorphism between their parameter spaces.}%
	\label{fig:hom}%
\end{wrapfigure}
Here,~$\|\cdot\|:\real^n\rightarrow \real$ is a norm such as the Euclidean norm ($L^2$) or the Manhattan norm ($L^1$). The~\emph{matching}~$\mu$ ranges over orientation-preserving homeomorphisms (possibly with additional constraints) between the parameter spaces of the shapes compared; as such, it defines a correspondence between the points of the compared shapes. A matching between surfaces with parameters~$p$ and~$t$ is illustrated in Figure~\ref{fig:hom}. Given one such matching we obtain a distance between $A$ and $B$ by taking the largest distance between any two corresponding points of~$A$ and~$B$. The \FD is the infimum of these distances taken over all possible matchings.
For moving points or static curves, we have as parameter space~$X=[0,1]$ and for moving curves or static surfaces, we have~$X=[0,1]^2$. We can define various similarity measures between shapes by imposing further restrictions on~$\mu$.

In practice a curve is generally represented by a sequence of~$P+1$ points. Assuming a linear interpolation between consecutive points, this results in a polyline with~$P$ segments. Analogously, a moving curve is a sequence of~$T+1$ polylines, each of~$P$ segments. We also interpolate the polylines linearly, yielding a bilinear interpolation, or a quadrilateral mesh of~$P\times T$ quadrilaterals.

\mypar{Related Work}
The \FD or related measures are frequently used to evaluate the similarity between
point trajectories~\cite{commuting,constrained,context}.
The \FD is also used to match point trajectories to a street network~\cite{map,mapMatching}.
The \FD between polygonal curves can be computed in near-quadratic time~\cite{fd,Bringmann14,soviets,extended}, and approximation algorithms~\cite{AronovHKWW06,driemel2012approximating}
have been studied.

The natural generalization to moving (parameterized) curves is to interpret the curves as surfaces parameterized over time and over the curve parameter. The \FD between surfaces is NP-hard~\cite{godau}, even for terrains~\cite{simpleHard}. In terms of positive algorithmic results for general surfaces the \FD is only known to be semi-computable~\cite{surface,Buchin07}. Polynomial-time algorithms have been given for the so called weak \FD~\cite{surface} and for the \FD between simple polygons~\cite{simplePoly} and so called folded polygons~\cite{foldedPoly}.

When interpreting moving curves as surfaces it is important to take the different roles of the two parameters into account: the first is inherently linked to time and the other to space. 
This naturally leads to restricted versions of the \FD of surfaces.
For curves, restricted versions of the \FD were considered~\cite{constrained,speedlimits}. For surfaces we are not aware of similar results.

\subsection{Results}
			We refine the \FD between surfaces to meaningfully compare moving curves.
			To do so, we restrict matchings to be one of several suitable classes.
			Representative matchings for the considered classes together with the running times of our results are illustrated in Figure~\ref{fig:results}.

			\begin{wrapfigure}{r}{0pt}
				\begin{tabular}{c}%
					\homII\\[-1pt]
					\MatchII\\[-1pt]
					$O(PT)$\\[1ex]

					\homIC\\[-1pt]
					\MatchIC\\[-1pt]
					$O(P^2T\log(PT))$\\[1ex]

					\homID\\[-1pt]
					\MatchID\\[-1pt]
					$O(P^3T\log P\log(PT))$\\[1ex]
					
					\homCC\\[-1pt]
					\MatchCC\\[-1pt]
					NP-complete\\[1ex]

					\homCD\\[-1pt]
					\MatchCD\\[-1pt]
					NP-hard\\[1ex]

					\homDD\\[-1pt]
					\MatchDD\\[-1pt]
					NP-hard
				\end{tabular}%
				\caption{The time complexities of the considered classes of matchings.\label{fig:results}}%
			\end{wrapfigure}

			The simplest class of matchings consists of a single predefined~\emph{\matchII} matching~$\mu(p,t)=(p,t)$.
			Hence, to compute the~\emph{\matchII} \FD, we need only determine a pair of matched points that are furthest apart.
			It turns out that one of the points of a furthest pair is a vertex of a moving curve (i.e. quadrilateral mesh), allowing computation in~$O(PT)$ time, see Section~\ref{sec:identity}.

			We discuss the~\emph{\matchIC} \FD in Section~\ref{sec:constant}.
			Here we assume that the matching of timestamps is known in advance, and the matching of positions is the same for each timestamp, so it remains constant.
			Our algorithm computes the positional matching minimizing the \FD.

			The~\emph{\matchID} \FD considered in Section~\ref{sec:dynamic} also assumes a predefined matching of timestamps, but does not have the constraint of the~\emph{\matchIC} class that the matching of positions remains constant over time.
			Instead, the positional matching may change continuously over time.

			Finally, in Section~\ref{sec:hardness}, we consider several cases where neither positional nor temporal matchings are predefined.
			The three considered cases are the~\emph{\matchCC},~\emph{\matchCD}, and~\emph{\matchDD} \FD.
			The~\matchCC class of matchings consists of a constant (but not predefined) matching of positions, as well as timestamps whereas in the~\matchCD class of matchings, the positional matching may change continuously.
			In the~\matchDD class, matchings range over orientation preserving homeomorphisms between parameter spaces, given that the corners of the parameter spaces are aligned.

			The last three classes are quite complex, and we give constructions proving that approximating the \FD within a factor~$1.5$ is NP-hard under these classes.
			For the~\matchCC and~\matchCD classes of matchings, this result holds even for moving curves embedded in~$\real^1$ whereas the result for the~\matchDD case holds for embeddings in~$\real^2$.

			Although we do not discuss classes where positional matchings are known in advance, these symmetric variants can be obtained by interchanging the time and position parameters for the discussed classes.
			Deciding which variant is appropriate for comparing two moving curves depends largely on how the data is obtained, as well as the use case for the comparison.
			For instance, the \matchIC variant may be used on a sequence of satellite images which have associated timestamps.
			The \matchID \FD is better suited for sensors with different sampling frequencies, placed on curve-like moving objects.

	\section{\MatchII Matchings}\label{sec:identity}
		\begin{wrapfigure}{l}{0pt}\homII\end{wrapfigure}
		Suppose we are given a single predefined matching~$\mu$ between the moving curves~$A$ and~$B:[0,P]\times[0,T]\rightarrow\real^n$.
		We can compute the \FD under this matching if we can find the corresponding points of~$A$ and~$B$ that are furthest apart, see Equation~\ref{eq:fdMu}.
		\begin{equation}\label{eq:fdMu}
		D_{\mu}(A,B) = \sup_{x\in [0,P]\times[0,T]}\|A(x)-B(\mu(x))\|
		\end{equation}

		If~$A$ and~$B$ are both of size~$P\times T$, then the identity matching~$\mu(p,t)=(p,t)$ allows us to simplify~$A(x)-B(\mu(x))$ into~$C(x)=A(x)-B(x)$ where~$C$ is again a quadrilateral mesh of size~$P\times T$.
		The \FD for the identity matching depends only on the point on~$C$ that is furthest from the origin.
		To see this, consider a single quadrilateral, the point furthest from the origin must be one its four corners since all points on the quadrilateral lie within the convex hull of its four corner points.
		Hence, it suffices to check only the distance to the origin for the~$O(PT)$ vertices of~$C$, see Equation~\ref{eq:fdC}.
		The \FD under the identity matching can then be computed in~$O(PT)$ time.

		\begin{equation}\label{eq:fdC}
		D_{id}(A,B) = \sup_{x\in [0,P]\times[0,T]}\|\underbrace{A(x)-B(x)}_{C(x)}\| = \sup_{x\in \{0,\dots,P\}\times\{0,\dots,T\}}\|C(x)\|
		\end{equation}

		Meshes of different size can be compared after introducing~$O(PT)$ dummy vertices, such that the meshes have equal dimensions and each vertex has an aligned vertex on the other mesh.
		For this, it is important to note that any quadrilateral can be subdivided into an equivalent mesh of four quadrilaterals with any point on the original quadrilateral as their shared corner.

		To extend this further, consider the case where the matching~$\mu(p,t)=(\pi(p),\tau(t))$ is defined by two piecewise-linear functions~$\pi$ and~$\tau$ of~$|\pi|$ and~$|\tau|$ vertices.
		This allows comparing moving curves under predefined realigments of timestamps as well as positions.
		For such a matching, the surface~$C(x)=A(x)-B(\mu(x))$ is a quadrilateral mesh of~$O((P+|\pi|)(T+|\tau|))$ vertices.
		We illustrate this in~Figure~\ref{fig:pwl} for~$t=4$ and~$|\tau|=3$ with vertices~$(\tau(0),\tau(1.5),\tau(4))=(0,2.5,4)$.
		In such case~$O(T+|\tau|)$ timestamps of~$A$ are matched with~$O(T+|\tau|)$ timestamps of~$B$.

		\begin{figure}[h]%
			\centering%
			\includegraphics{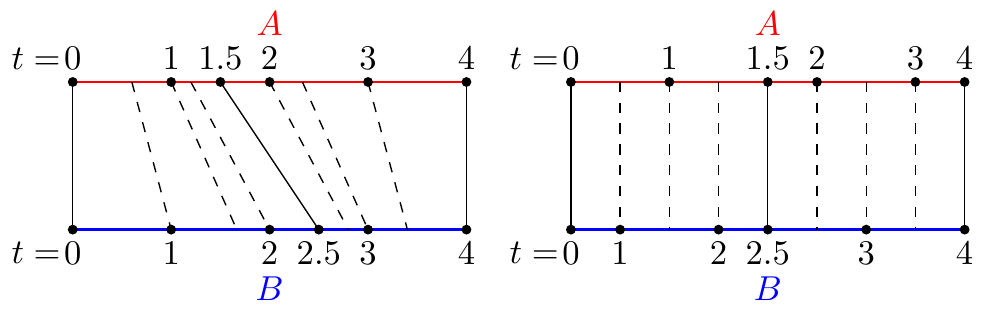}%
			\caption{A piecewise linear matching of timestamps.
			Left, the matching drawn with the original time axes.
			Right, the matching drawn after warping the time axes.}%
			\label{fig:pwl}%
		\end{figure}%

		The same can be done given a piecewise-linear reparameterization of positions.
		As a result, the \FD under a predefined piecewise-linear reparameterization of timestamps and positions can be computed in~$O((P+|\pi|)(T+|\tau|))$ time.

	\section{\MatchIC Matchings}\label{sec:constant}
		\begin{wrapfigure}{l}{0pt}\homIC\end{wrapfigure}
		In this section, we consider the class of \emph{\matchIC matchings} where the matching~$\mu(p,t)=(\pi(p),t)$ assumes no realignments of time and a constant reparameterization%
		\footnote{We make slight abuse of notation:~we allow $\mu$ to be drawn from the closure of the space of matchings. As a consequence,~$\pi(x)$ is not an actual function since some~$x$ can be matched to multiple (but connected)~$y$ values.}~$\pi$ of positions.
		Thus, the goal is to find a continuous nondecreasing surjection~$\pi$ on~$[0,P]$, such that the \FD is minimized.

		Before we present the algorithm to find \matchIC matchings, we refer to an existing algorithm~\cite{fd} that computes the \FD between static curves.
		This classic algorithm makes use of a data structure called the $\e$-freespace diagram, which is the set of parameter pairs for which the represented points are at most~$\e$ apart.
		A freespace diagram for two static curves is illustrated in Figure~\ref{fig:fsdStatic}.

		The matching~$\pi(x)=y$ between the parameter spaces of static curves~$A$ and~$B$ can be embedded as a bimonotone path~$\{(x,y)\mid \pi(x)=y\}$ in the freespace.
		Therefore, the \FD is at most~$\e$ if and only if there exists a bimonotone path from the bottom-left to the top-right corner of the $\e$-freespace.
		Using freespace diagrams, the decision problem for static curves can be solved in~$O(P^2)$ time.

		\begin{figure}[h]\centering
			\includegraphics{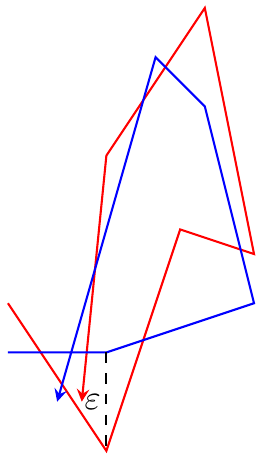}\qquad\qquad\includegraphics{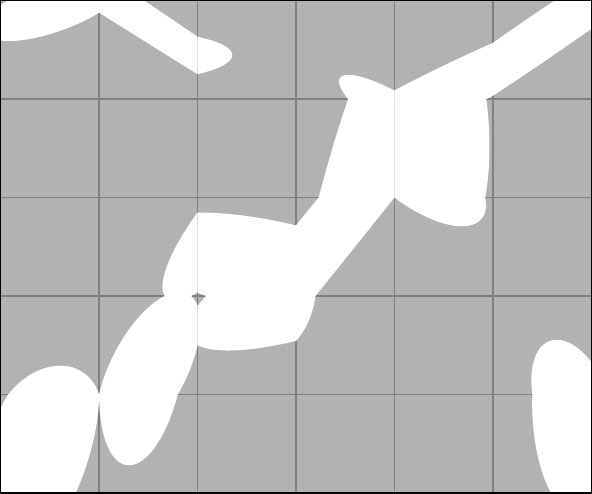}
			\caption{\label{fig:fsdStatic}
				Left: Curves $A$ (red) and $B$ (blue) with~\FD~$\e$.
				Right: In white, their freespace diagram $\F_\e=\{(x,y)\in[0,6]\times[0,5]\mid \|A(x)-B(y)\|\leq\e\}$.
				If we draw the freespace diagram for any smaller value of~$\e$, no bimonotone path through the freespace connects the bottom-left corner to the top-right corner.}
		\end{figure}

		We extend this approach to compute matchings between moving curves.
		For this, consider the 3D freespace defined by Equation~\ref{eq:fsd3d}.
		Since any \matchIC matching~$\mu(p,t)=(\pi(p),t)$ is defined by a bimonotone path~$\pi:[0,P]\rightarrow[0,P]$, the matching~$\mu$ is embedded in the freespace diagram as the surface~$\mu=\{(x,y)\mid\pi(x)=y\}\times[0,T]$.
		Such a matching yields a~\FD of at most~$\e$ if and only if~$\mu\subseteq\fsdIII$.
		The 3D freespace consists of~$O(P^2T)$ cells $C_{x,y,t}=\fsdIII\cap([x,x+1]\times[y,y+1]\times[t,t+1])$ for~$(x,y,t)\in\nat^3$.
		\begin{equation}\label{eq:fsd3d}
		\fsdIII=\{(x,y,t)\in[0,P]\times[0,P]\times[0,T]\mid \|A(x,t)-B(y,t)\|\leq\e\}
		\end{equation}

		We can simplify the three-dimensional freespace~(\fsdIII) into a two-dimensional one~(\fsdII) with
		\begin{align}
		\label{eq:fsdIsct}
		\fsdII = &
			~~\bigcap\limits_{t\in[0,T]}~~
			\{(x,y)\in[0,P]\times[0,P]\mid \|A(x,t)-B(y,t)\|\leq\e\}
		\end{align}
		and find the path~$\pi$ defining~$\mu$ in it. To simplify Equation~\ref{eq:fsdIsct} we prove the following lemma.

		\begin{lem}A cell~$C_{x,y,t}$ of the freespace has a convex intersection with any line parallel to the~$xy$-plane or the~$t$-axis.\label{thm:convex}\end{lem}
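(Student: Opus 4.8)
The plan is to exploit the fact that, within a single cell, each moving curve is a bilinear function of its two parameters, so that the difference $A(x,t)-B(y,t)$ restricts to an \emph{affine} function of the parameter of any line from the two families named in the statement. Once we have an affine map into $\real^n$, composing it with the norm yields a convex real function, whose sublevel set $\{\,\cdot\mid\|\cdot\|\le\e\,\}$ is an interval; intersecting that interval with the unit box that defines the cell keeps it an interval, and ``convex intersection with a line'' means exactly that.

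Concretely, I would first fix a cell $C_{x,y,t}$ with $(x,y,t)\in\nat^3$ and observe that on $[x,x+1]\times[t,t+1]$ the surface $A$ is a single quadrilateral of its mesh, hence $A(\cdot,\cdot)$ is bilinear there, and likewise $B$ on $[y,y+1]\times[t,t+1]$. Then I treat the two families in turn. For a line parallel to the $t$-axis, $x$ and $y$ are held fixed and only $t$ varies; a bilinear function with one argument frozen is affine in the other, so $A(x,t)$ and $B(y,t)$ are each affine in $t$, and therefore $t\mapsto A(x,t)-B(y,t)$ is an affine map $\real\to\real^n$. For a line parallel to the $xy$-plane, $t=t_0$ is fixed; parametrizing the line as $(x,y)=(x_0,y_0)+s\,(u,v)$, the map $s\mapsto A(x_0+su,t_0)$ is affine (it is affine in $x$ since $t$ is frozen, and $x$ is affine in $s$), and symmetrically for $s\mapsto B(y_0+sv,t_0)$, so again $s\mapsto A-B$ is affine.

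In either case, write the resulting affine map as $\ell(r)$. Since every norm $\|\cdot\|$ is convex, $r\mapsto\|\ell(r)\|$ is convex, hence $\{r\mid\|\ell(r)\|\le\e\}$ is a (possibly empty or unbounded) interval. The intersection of the line with the cell is the intersection of this interval with the interval of $r$ for which the line lies inside $[x,x+1]\times[y,y+1]\times[t,t+1]$, and the intersection of two intervals is an interval; this is the claimed convexity.

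I do not expect a real obstacle here. The one point that needs care is that bilinearity collapses to affinity \emph{precisely} for lines in these two families: a generic line that moves in $x$, $y$, and $t$ simultaneously would make the restriction quadratic, and the statement would genuinely fail, so the proof must use that $t$ (or the $xy$-direction) is held constant. A second, purely bookkeeping point is that under the footnoted abuse of notation a cell may be empty and a line may miss it entirely, but this is harmless since $\varnothing$ is convex, and the argument also does not depend on which norm ($L^1$ or $L^2$) is used, only on its convexity.
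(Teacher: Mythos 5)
Your proof is correct and follows essentially the same route as the paper's: within a cell each mesh is bilinear, so the difference $A(x,t)-B(y,t)$ is affine along any line with $t$ fixed or with $(x,y)$ fixed, and the preimage of the convex norm ball under an affine map (intersected with the cell's box) is convex. The extra detail you supply about parametrizing the line and about empty intersections is fine but not a different argument.
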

		\begin{proof}
		Each cell in the freespace is the freespace induced by a pair of quadrilaterals, so consider two quadrilaterals~$A(x,t)$ and~$B(y,t)$.
		These quadrilaterals are bilinear interpolations between four corner points.
		Hence,~$A(x,t)-B(y,t)$ is an affine map for each fixed~$t$.
		Likewise,~$A(x,t)-B(y,t)$ is an affine map for each fixed pair~$(x,y)$.
		Because the preimage of a convex norm ball under an affine map is convex, the intersection of a freespace cell with lines parallel to the~$t$-axis or the~$xy$-plane forms a convex set.
		\end{proof}

		\begin{lem}\label{thm:convexSnap}
		$\fsdII =
			\bigcap\limits_{t\in\{0,\dots,T\}}
			\{(x,y)\in[0,P]\times[0,P]\mid \|A(x,t)-B(y,t)\|\leq\e\}$.
		\end{lem}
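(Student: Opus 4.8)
The plan is to prove the two inclusions of the claimed identity separately; the inclusion ``$\subseteq$'' is immediate and the work lies in ``$\supseteq$''. For ``$\subseteq$'', observe that~$\{0,\dots,T\}\subseteq[0,T]$, so the right-hand side is an intersection over a subfamily of the sets whose intersection defines~$\fsdII$ in Equation~\ref{eq:fsdIsct}; intersecting over fewer sets can only enlarge the result, hence~$\fsdII$ is contained in the right-hand side.

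For ``$\supseteq$'', I would fix a point~$(x,y)\in[0,P]\times[0,P]$ belonging to the right-hand side, i.e.\ with~$\|A(x,t)-B(y,t)\|\leq\e$ for every integer~$t\in\{0,\dots,T\}$, equivalently~$(x,y,t)\in\fsdIII$ for all such~$t$, and argue that the same holds for every real~$t\in[0,T]$. Fix an arbitrary~$t\in[0,T]$ and choose an integer~$t_0\in\{0,\dots,T-1\}$ with~$t\in[t_0,t_0+1]$, together with integers~$x_0,y_0$ such that~$x\in[x_0,x_0+1]$ and~$y\in[y_0,y_0+1]$. The three points~$(x,y,t_0)$,~$(x,y,t)$, and~$(x,y,t_0+1)$ then all lie on one segment parallel to the~$t$-axis inside the single cell~$C_{x_0,y_0,t_0}$. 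Its two endpoints lie in~$\fsdIII$ by hypothesis, hence in~$C_{x_0,y_0,t_0}$, so by Lemma~\ref{thm:convex} the intersection of~$C_{x_0,y_0,t_0}$ with that segment is convex and therefore contains the whole segment, in particular~$(x,y,t)$. Thus~$\|A(x,t)-B(y,t)\|\leq\e$; since~$t$ was arbitrary, $(x,y)\in\fsdII$, which finishes the inclusion and the proof.

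I do not expect a serious obstacle here: the only subtlety is that Lemma~\ref{thm:convex} is phrased for a \emph{single} freespace cell, so one must make sure the interpolated point~$(x,y,t)$ and the two integer-time endpoints share a common cell~$C_{x_0,y_0,t_0}$ --- which is clear since~$x$ and~$y$ are held fixed while~$t$ ranges over a single unit interval --- and one should note that when~$x$,~$y$, or~$t$ happens to be an integer the relevant points lie on a shared face of several cells, and any one of these cells may be used because adjacent cells agree on their overlap. Conceptually the lemma just records that for fixed position parameters the map~$t\mapsto\|A(x,t)-B(y,t)\|$ is convex on each unit time interval, so its maximum over~$[0,T]$ is attained at an integer timestamp and sampling time only at the integers loses nothing.
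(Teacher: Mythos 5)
Your proof is correct and is essentially the paper's argument: the paper also invokes Lemma~\ref{thm:convex} for the intersection of a cell with a line parallel to the~$t$-axis, stated in contrapositive form (if~$(x,y,t)\notin\fsdIII$ then~$(x,y,\lfloor t\rfloor)\notin\fsdIII$ or~$(x,y,\lceil t\rceil)\notin\fsdIII$), whereas you argue the direct implication with the bookkeeping about which cell contains the segment spelled out.
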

		\begin{proof}
		By Lemma~\ref{thm:convex}, the intersection of a cell of~$\fsdIII$ with a line parallel to the~$t$-axis is convex.
		Hence, if a point~$(x,y,t)\notin\fsdIII$, then either~$(x,y,\lfloor t\rfloor)\notin\fsdIII$ or~$(x,y,\lceil t\rceil)\notin\fsdIII$.
		Therefore the internal structure of freespace cells can safely be ignored.
		\end{proof}

		We denote the~$O(P^2)$ cells of the 2D freespace by~$C_{x,y}=([x,x+1]\times[y,y+1])\cap\fsdII$ where~$(x,y)\in\nat^2$. 
		\begin{lem}\label{thm:isctConvexCell}
		Every cell~$C_{x,y}$ in the 2D freespace~$\fsdII$ is convex.
		\end{lem}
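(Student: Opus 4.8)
The plan is to exhibit $C_{x,y}$ as an intersection of convex sets and then conclude by the fact that convexity is preserved under arbitrary intersection.

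First I would use Lemma~\ref{thm:convexSnap} to write $\fsdII$ as the intersection, over the finitely many integer timestamps $t\in\{0,\dots,T\}$, of the sublevel sets $F_t=\{(x,y)\in[0,P]\times[0,P]\mid\|A(x,t)-B(y,t)\|\le\e\}$. Since intersecting with the fixed box $[x,x+1]\times[y,y+1]$ distributes over an intersection, this yields $C_{x,y}=\bigcap_{t\in\{0,\dots,T\}}\bigl([x,x+1]\times[y,y+1]\cap F_t\bigr)$.

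Next I would argue that each term $[x,x+1]\times[y,y+1]\cap F_t$ is convex. This is where I reuse the observation already made in the proof of Lemma~\ref{thm:convex}: on a single cell and for a fixed integer timestamp~$t$, the patches~$A$ and~$B$ are linear interpolations, so $x\mapsto A(x,t)$ is affine on $[x,x+1]$ and $y\mapsto B(y,t)$ is affine on $[y,y+1]$; hence $(x,y)\mapsto A(x,t)-B(y,t)$ is an affine map on the box. The $\e$-ball of the norm $\|\cdot\|$ is convex, and the preimage of a convex set under an affine map is convex, so $F_t$ restricted to the box is the intersection of a convex preimage with the convex box, hence convex. Taking the intersection over all $t$ then shows $C_{x,y}$ is convex.

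The argument is essentially bookkeeping layered on top of Lemma~\ref{thm:convex}, so I do not expect a genuine obstacle. The one place to be slightly careful is the reduction to integer timestamps via Lemma~\ref{thm:convexSnap}: it lets us work with a finite intersection and with genuinely linear (rather than merely bilinear) patch restrictions, which keeps the affine-map step clean; even without that reduction one could instead note that for every real $t$ the bilinear patch is still affine in the spatial parameter alone, so each $F_t$ is convex anyway and the (now infinite) intersection remains convex.
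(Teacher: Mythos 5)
Your proof is correct and follows essentially the same route as the paper: both exhibit $C_{x,y}$ as an intersection of convex fixed-$t$ slices, each convex because $(x,y)\mapsto A(x,t)-B(y,t)$ is affine on the cell and the preimage of the convex norm ball under an affine map is convex. Your version just spells out the bookkeeping (and correctly notes the integer-timestamp reduction is not even needed) that the paper compresses into one line citing Lemma~\ref{thm:convex}.
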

		\begin{proof}
		By Lemma~\ref{thm:convex}, each cell is the intersection of convex sets, so each~$C_{x,y}$ is convex.
		\end{proof}

		\begin{wrapfigure}[13]{r}{0pt}%
			\includegraphics{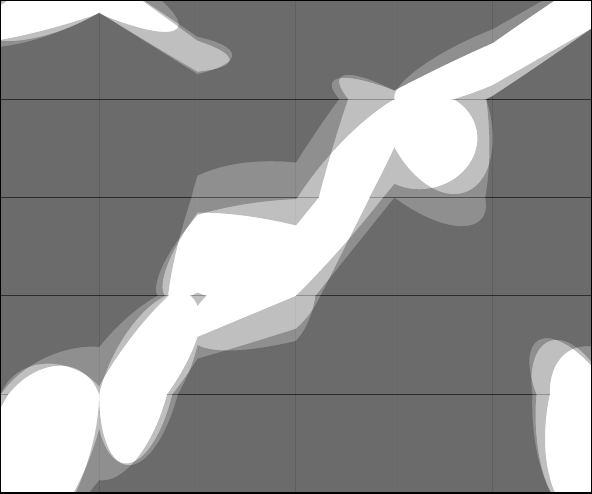}
			\caption{\fsdII for two moving curves. Darker areas are in fewer layers of~\fsdIII.\label{fig:fsdII}}
		\end{wrapfigure}
		Figure~\ref{fig:fsdII} illustrates (in white) what the 2D freespace~$\fsdII$ might look like for two moving curves.
		As before, any smaller value of~$\e$ would disconnect the bottom-left from the top-right corner.
		For a given~$\e$, for each of the~$O(P^2)$ cells, the boundary intervals can be computed in~$O(T)$ time.
		Therefore, finding an $xy$-monotone path takes~$O(P^2T)$ time, solving the decision problem for the \FD under \matchIC matchings in~$O(P^2T)$ time.

		\subsection{Parametric Search}\label{sec:paramConst}
			To compute the exact \FD efficiently, we use the decision problem in a parametric search for the minimum~$\e$ admitting a matching.
			When increasing~$\e$ starting from~$\e=0$, there are three types of critical values of~$\e$ for which a passage might open in the freespace:

			\begin{enumerate}
			\item[a)]The minimal~$\e$ with~$(0,0)\in\fsdII$ and~$(P,P)\in\fsdII$.
			\item[b)]One of the four boundaries of a cell~$C_{x,y}$ becomes nonempty.
			\item[c)]The lower (or left) endpoint on the boundary of cell~$C_{x,y}$ aligns with the upper (or right) endpoint on the boundary of~$C_{x+i,y}$ (or~$C_{x,y+j}$).
			\end{enumerate}

			Note that all critical values occur when two endpoints align (or an endpoint aligns with a gridpoint).
			For each of the~$O(T)$ layers defining~\fsdII, we use a function in~$\e$ to represent the~$x$- or~$y$-position of such endpoint (or gridpoint).
			This amounts to a total of~$O(P^2T)$ functions, and each critical value occurs when two of them intersect.
			Since the intersections between any pair of these functions can be computed in constant time, we can apply a parametric search~\cite{megiddo1983applying}.

			We use Cole's variant~\cite{cole1987slowing} of the parametric search to find the desired critical value in~$O((k+\mtime_\text{dec})\log k)$ time.
			Here~$k=O(P^2T)$ is the number of functions to which the parametric search is applied and~$\mtime_\text{dec}=O(P^2T)$ is the running time for the decision problem.
			We obtain the running time of Theorem~\ref{thm:fdIC}.
			\begin{thm}\label{thm:fdIC}
			The \matchIC \FD between quadrilateral meshes can be computed in~$O(P^2T\log(PT))$ time.
			\end{thm}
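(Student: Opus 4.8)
The plan is to combine the~$O(P^2T)$-time decision procedure described above with Cole's variant~\cite{cole1987slowing} of parametric search~\cite{megiddo1983applying}, along the lines already sketched. Recall the decision procedure: given a value~$\e$, it builds the two-dimensional freespace~$\fsdII$, which by Lemmas~\ref{thm:convexSnap} and~\ref{thm:isctConvexCell} is the intersection of the~$O(T)$ integer layers of~$\fsdIII$ and has~$O(P^2)$ convex cells; it computes the four boundary intervals of each cell in~$O(T)$ time, and then tests whether an~$xy$-monotone path joins~$(0,0)$ to~$(P,P)$. This runs in~$\mtime_\text{dec}=O(P^2T)$ time, and it is comparison-based in~$\e$: every step that depends on~$\e$ reduces to comparing the positions, on cell boundaries, of two freespace-interval endpoints (or of an endpoint and an integer coordinate).

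I would then pin down the family of functions of~$\e$ that drive the search. Fix a layer~$t\in\{0,\dots,T\}$ and a unit horizontal or vertical segment of a cell boundary. On this segment~$A(\cdot,t)$ is affine and~$B(\cdot,t)$ is constant, so~$\|A(\cdot,t)-B(\cdot,t)\|$ is (for the Euclidean norm, the square root of) a univariate quadratic along the segment; hence the two endpoints of the freespace interval on that segment are explicit algebraic functions of~$\e$, of the form~$c_1+c_2\sqrt{c_3+c_4\e^2}$. In total there are~$O(P^2T)$ such endpoint functions. The critical values of types~(a),~(b), and~(c) are, respectively: an endpoint function meeting a fixed integer coordinate, an endpoint function meeting its companion on the same segment, and — in~$\fsdII$ — a composite boundary endpoint of one cell aligning with the composite boundary endpoint of an adjacent cell. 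A composite lower (left) endpoint of a~$\fsdII$ cell is a pointwise maximum over the~$O(T)$ layers of endpoint functions, and a composite upper (right) endpoint is a pointwise minimum; such an envelope changes combinatorially only at pairwise intersections of the underlying functions, and an alignment of two composite endpoints is witnessed by an intersection of two of the underlying functions. Consequently every critical value is a root of the difference of two of the~$O(P^2T)$ endpoint functions (or of one function and a constant), and each such root is obtained in~$O(1)$ time by solving a polynomial of bounded degree in~$\e$.

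With this in place the parametric search goes through. Run the decision procedure symbolically at the unknown optimal~$\e^*$: each of its~$\e$-dependent comparisons is a comparison of two endpoint functions evaluated at~$\e^*$, i.e., the sign at~$\e^*$ of their difference, which we resolve with one numeric call to the decision procedure at the corresponding root. Taking the~$k=O(P^2T)$ endpoint functions as the objects to be sorted by their value at~$\e^*$ and applying Cole's variant (batching comparisons over the levels of a sorting network and resolving each batch by a weighted-median oracle call), all comparisons are resolved with~$O(\log k)$ calls to the decision procedure plus~$O(k\log k)$ extra work, for a total of~$O((k+\mtime_\text{dec})\log k)=O((P^2T+P^2T)\log(P^2T))=O(P^2T\log(PT))$. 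The smallest~$\e$ for which an~$xy$-monotone path exists — the~\matchIC~\FD — is the critical value produced.

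The step I expect to be the main obstacle is the bookkeeping of the second paragraph: checking that the combinatorial type of the monotone-reachability computation on~$\fsdII$ changes \emph{only} at the enumerated endpoint-alignment events, and in particular handling the interaction of the per-layer quadratics in the formation of a~$\fsdII$ cell boundary — which layer realizes a given composite endpoint, and when that witness changes as~$\e$ grows. One must also confirm that the endpoint functions are tame enough (each pair crossing~$O(1)$ times, with all crossings available in closed form) for the sorting-network machinery and the constant-time resolution of comparisons to be valid; for bilinear meshes under the~$L^1$ or~$L^2$ norm this is routine, but it is the point that genuinely needs care.
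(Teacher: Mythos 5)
Your proposal is correct and follows essentially the same route as the paper: the same $O(P^2T)$ decision procedure, the same taxonomy of critical values a)--c) realized as pairwise intersections of $O(P^2T)$ endpoint functions of~$\e$, and Cole's variant of parametric search giving $O((P^2T+\mtime_\text{dec})\log(P^2T))=O(P^2T\log(PT))$. The extra detail you supply on the algebraic form of the endpoint functions and on composite cell boundaries as envelopes over the $O(T)$ layers is a legitimate elaboration of points the paper leaves implicit, not a different approach.
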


			\begin{rmk}
				To compare quadrilateral meshes under a piecewise linear realignment timestamps, we can subdivide the quadrilateral meshes as explained in~Section~\ref{sec:identity}.
				Although for simplicity we have assumed that the meshes are of equal size, when computing the \FD between meshes of different sizes,~$P\times T$ and~$Q\times T$, the 2D freespace has only~$O(PQ)$ cells.
				Thus the decision problem is solved in~$O(PQT)$ time and the exact computation takes~$O(PQT\log(PQT))$ time.
			\end{rmk}

			\begin{rmk}
				If the inputs of the algorithm are two sequences of~$O(T)$ curves without predefined interpolations to obtain quadrilateral meshes, we can still measure their \FD for the optimal (but unknown) linear interpolation.
				Due to the convexity of freespace cells, the \FD is the minimum \FD between two curves~$A_t$ and~$B_t$, which can be computed in~$O(TP^2\log P)$ time by running the original algorithm~$O(T)$ times.
			\end{rmk}

\section{\MatchID Matchings}\label{sec:dynamic}
		\begin{wrapfigure}{l}{0pt}\homID\end{wrapfigure}
\emph{Synchronous dynamic} matchings align timestamps under the identity matching, but the matching of positions may change continuously over time.
		Specifically, the matching is defined as~$\mu(p,t)=(\pi_t(p),t)$.
		Here,~$\mu(p,t):[0,P]\times[0,T]\rightarrow[0,P]\times[0,T]$ is continuous, and for any~$t$ the matching~$\pi_t:[0,P]\rightarrow[0,P]$ between the two curves at that time is a nondecreasing surjection.

\subsection{Freespace Partitions in 2D}
		Recall that the freespace diagram~$\F_\e$ is the set pairs of points that are within distance~$\e$ of each other.
		$$(x,y)\in \F_\e \iff \|A(x)-B(y)\|\leq\e$$
		If~$A$ and~$B$ are curves with parameter space~$[0,P]$, then their freespace diagram is two-dimensional, and the~\FD is the minimum value of~$\e$ for which an $xy$-monotone path (representing~$\mu$) from~$(0,0)$ to~$(P,P)$ through the freespace exists.

		We use a variant of the max-flow min-cut duality to determine whether a matching through the freespace exists.
		Before we present the 3D variant for moving curves with synchronized timestamps, we illustrate the idea in the fictional 2D freespace of Figure~\ref{fig:muSep2D}.
\newpage
	\begin{wrapfigure}{r}{0pt}%
		\includegraphics{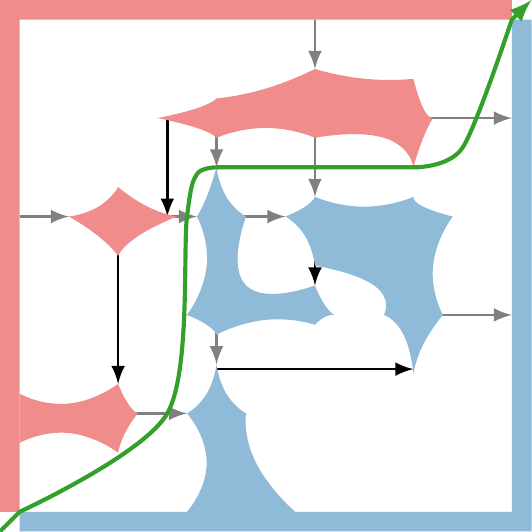}%
		\caption{A matching (green) in the 2D freespace (white).}%
		\label{fig:muSep2D}%
	\end{wrapfigure}%
		Here, any matching---such as the green path---must be an~$x$- and~$y$-monotone path from the bottom left to the top right corner and this matching must avoid all obstacles (i.e. all points not in~$\F_e$).
		Therefore each such matching divides the obstacles in two sets: those above, and those below the matching.

		Suppose we now draw a directed edge from an obstacle~$a$ to an obstacle~$b$ \tIff any matching that goes over~$a$ must necessarily go over~$b$.
		The key observation is that a matching exists unless such edges can form a path from the lower-right boundary to the upper-left boundary of the freespace.
		In the example, a few trivial edges are drawn in black and gray.
		If all obstacles were slightly larger, an edge could connect a blue obstacle with a red obstacle, connecting the two boundaries by the edges drawn in black.

	\subsection{Freespace Partitions in 3D}\label{sec:partition3D}
		In contrast to the 2D freespace where the matching is a path, matchings of the form~$\mu(p,t)=(\pi_t(p),t)$ form surfaces in the 3D freespace~\fsdIII (see Equation~\ref{eq:fsd3dMain}).
		Such a surface again divides the obstacles in the freespace in two sets and can be punctured by a path connecting two boundaries.
		We formalize this concept for the 3D freespace and give an algorithm for deciding the existence of a matching.
		\begin{equation}\label{eq:fsd3dMain}
		(x,y,t)\in\fsdIII \text{ \tIff~} \|A(x,t)-B(y,t)\|\leq\e
		\end{equation}

		For $x,y,t\in\nat$, the cell~$C_{x,y,t}$ of the 3D freespace is the set~$\fsdIII\cap([x,x+1]\times[y,y+1]\times[t,t+1])$. The property of Lemma~\ref{thm:convex} holds for all such cells.

		\begin{wrapfigure}[11]{r}{0pt}%
			\includegraphics{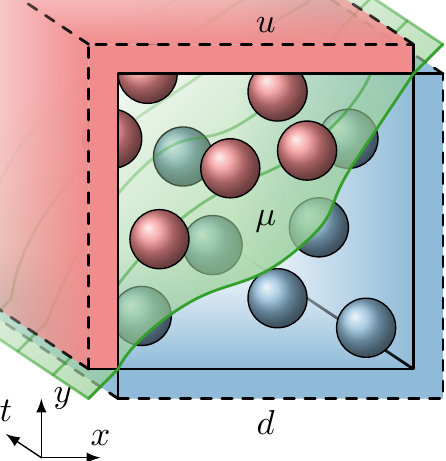}
			\caption{$\mu$ separates~$u$ and~$d$.}%
			\label{fig:muSepUD}%
		\end{wrapfigure}%
		
		We divide the set of points not in~$\fsdIII$ into a set~$O$ of so-called obstacles, such that each individual obstacle is a connected point set.
		Let~$u$ be the open set of points representing the left and top boundary of~$\fsdIII$.
		Symmetrically, let~$d$ represent the bottom and right boundary, see Figure~\ref{fig:muSepUD}.
		Denote by~$O'\subset O$ the obstacles between the boundaries.

		\begin{description}
			\item $O = \{u,d\}\cup O'$ with~$\bigcup O' = ([0,P]^2\times[0,T])\setminus \fsdIII$;
			\item $u = \{ (x,y,t) \mid (x<0 \land y>0) \lor (x<P \land y>P) \}$;
			\item $d = \{ (x,y,t) \mid (x>0 \land y<0) \lor (x>P \land y<P) \}$.
		\end{description}

		Given a matching~$\mu$, let~$D \subseteq O$ be the set of obstacles below it, then~$u\notin D$ and~$d\in D$.
		Here, we use axes~$(x,y,t)$ and say that a point is below some other point if it has a smaller~$y$-coordinate.
		Because each obstacle is a connected set and~$\mu$ cannot intersect obstacles, a single obstacle cannot lie on both sides of the same matching.
		Because all matchings have~$u \notin D$ and~$d \in D$, a matching exists \tIff $\neg(d \in D \implies u \in D)$.

		We compute a relation~\R of elementary dependencies between obstacles, such that its transitive closure~\Rtrans has~$\inRtrans{d}{u}$ if and only if $d\in D\implies u\in D$.
		Let~$\inR{a}{b}$ \tIff $a\cup b$ is connected ($a$ touches~$b$) or there exists some point~$(x_a,y_a,t_a)\in a$ and~$(x_b,y_b,t_b)\in b$ with~$x_a\leq x_b$,~$y_a\geq y_b$ and~$t_a=t_b$.
		We prove in Lemmas~\ref{lem:witnessIfNoD} and~\ref{lem:matchIfD} that this choice of~$\R$ satisfies the required properties and in Theorem~\ref{thm:fdrel} that we can use the transitive closure~$\Rtrans$ of~$\R$ to solve the decision problem of the~\FD.

		\begin{lem}\label{lem:witnessIfNoD}
			If~$\inRtrans{a}{b}$, then~$a\in D \implies b\in D$.
		\end{lem}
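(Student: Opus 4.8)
The plan is to prove the statement for the elementary relation $\R$ first and then lift it to the transitive closure $\Rtrans$ by induction on the length of the chain witnessing $\inRtrans{a}{b}$. So the heart of the argument is the single-step claim: if $\inR{a}{b}$, then $a\in D \implies b\in D$. Assume $a\in D$, i.e. $a$ lies below the matching surface $\mu$, and suppose for contradiction that $b\notin D$, i.e. $b$ lies above $\mu$. I would split into the two cases in the definition of $\R$.

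In the first case, $a\cup b$ is connected. Since $\mu$ is a surface that separates the obstacle set into $D$ and $O\setminus D$ and no obstacle meets $\mu$, a connected set cannot have part below $\mu$ and part above $\mu$; hence $a$ and $b$ lie on the same side, contradicting $a\in D$, $b\notin D$. In the second case, there are points $(x_a,y_a,t_a)\in a$ and $(x_b,y_b,t_b)\in b$ with $x_a\le x_b$, $y_a\ge y_b$ and $t_a=t_b$. Fix the common time $t=t_a=t_b$ and restrict attention to the slice $\mu\cap(\real^2\times\{t\})$, which is the $xy$-monotone matching path $\pi_t$. Because $a\in D$, the point $(x_a,y_a)$ lies below the path $\pi_t$, so $\pi_t$ must pass through a point $(x_a,y')$ with $y'>y_a$; by $xy$-monotonicity, at parameter $x_b\ge x_a$ the path is at height $y''\ge y'>y_a\ge y_b$, so $(x_b,y_b)$ lies below $\pi_t$ as well. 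Hence $b\in D$, a contradiction. This settles the base case.

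For the inductive step, $\inRtrans{a}{b}$ means there is a finite chain $a=c_0\R c_1\R\cdots\R c_k=b$. Applying the single-step claim to each link gives $a\in D\implies c_1\in D\implies\cdots\implies b\in D$, which is exactly the desired implication; an empty chain ($a=b$) is trivial. I would phrase this as a short induction on $k$.

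The main obstacle is making the slicing argument in the second case fully rigorous at the level of the matching being drawn from the \emph{closure} of the space of matchings (as flagged in the paper's footnote), where $\pi_t$ may be a "path" that is only weakly monotone and may map a single $x$ to an interval of $y$-values. I would handle this by working with the closed region below $\mu$ rather than with a strict "below a functional path" notion: "$(x_a,y_a)$ below $\pi_t$" should mean $(x_a,y_a)$ is in the closure of the sublevel region, and monotonicity of the region's upper boundary then still forces $(x_b,y_b)$ into that region for $x_b\ge x_a$, $y_b\le y_a$. Care is also needed so that the boundary pseudo-obstacles $u$ and $d$ behave correctly, but their defining inequalities are exactly of the monotone form used above, so the same slice argument covers them.
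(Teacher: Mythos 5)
Your proposal is correct and follows essentially the same route as the paper: handle the single-step relation \R by the connectedness argument for touching obstacles and the fixed-time monotonicity argument for the dominance case, then lift to \Rtrans by the obvious induction on the chain. The paper's proof is just a terser version of your slice argument (deriving the contradiction $y_a<y_\mu\leq y'_\mu<y_b$ against $y_a\geq y_b$), so no substantive difference.
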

		\begin{proof}
			Assume that~$\inR{a}{b}$, then either~$a$ touches~$b$ and no matching can separate them, or there exists some~$(x_a,y_a,t)\in a$ and~$(x_b,y_b,t)\in b$ with~$x_a\leq x_b$,~$y_a\geq y_b$.
			If there were some matching~$\mu$ with~$a\in D$, then~$(x_a,y_\mu,t)\in\mu$ for some~$y_\mu>y_a$.
			Similarly, if~$b\notin D$, then~$(x_b,y'_\mu,t)\in\mu$ for some~$y'_\mu<y_b$.
			We can further deduce from~$x_a\leq x_b$ and monotonicity of~$\mu$ that we can pick~$y'_\mu$ such that~$y_a<y_\mu\leq y'_\mu<y_b$.
			However, this contradicts~$y_a\geq y_b$, so such a matching does not exist.
			Hence,~$a\in D \implies b\in D$ whenever~$\inR{a}{b}$ and therefore whenever~$\inRtrans{a}{b}$.
		\end{proof}
		\begin{lem}\label{lem:matchIfD}
			If~$d\in D\implies u\in D$, then~$\inRtrans{d}{u}$.
		\end{lem}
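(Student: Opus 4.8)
The plan is to prove the contrapositive: assuming $\notinRtrans{d}{u}$, I will construct an explicit matching $\mu$, which shows $\neg(d\in D\implies u\in D)$ fails to be forced — more precisely, it exhibits a valid matching, so by the earlier discussion $d\in D$ and $u\notin D$ for that $\mu$, contradicting the hypothesis $d\in D\implies u\in D$. So the real content is: if the transitive closure of $\R$ does not connect $d$ to $u$, then a matching surface exists.

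First I would fix each timestamp $t\in\{0,\dots,T\}$ and work in the 2D slice $\fsdII_t=\{(x,y)\mid\|A(x,t)-B(y,t)\|\le\e\}$, whose cells are convex by Lemma~\ref{thm:convex}. Within a slice, partition the complement of $\fsdII_t$ into its connected obstacle components and consider the relation $\R$ restricted to a single slice (the ``$x_a\le x_b$, $y_a\ge y_b$, $t_a=t_b$'' clause together with touching). The key planar lemma I need is: in a single slice, if no chain of $\R$-edges connects $d$ to $u$, then there is an $xy$-monotone path from $(0,0)$ to $(P,P)$ avoiding all obstacles. This is essentially the standard Fr\'echet reachability argument recast as a separation statement — the set of obstacles $\R$-reachable from $d$ forms a ``staircase-closed'' region that, by assumption, misses $u$, and its upper-left boundary is an $xy$-monotone curve serving as the path $\pi_t$. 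I would define $\pi_t$ as the upper-left boundary of the downward/leftward $\R$-closure of $d$ within slice $t$.

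The main obstacle is continuity in $t$: the slices $\pi_t$ must glue into a continuous surface $\mu(p,t)=(\pi_t(p),t)$, and here the touching clause of $\R$ (which links obstacles across adjacent timestamps via connectedness) is exactly what is needed. Because a single obstacle is connected and $\R$ includes ``$a\cup b$ connected'', an obstacle that straddles times $t$ and $t+1$ is $\R$-equivalent at both ends; hence the set of obstacles separated below $\mu$ is consistent across time, and $\pi_t$ varies continuously. I would argue that if $\pi_t$ and $\pi_{t+1}$ disagreed in a way that could not be interpolated monotonically, there would be an obstacle forcing an $\R$-chain (using touching between slices plus the within-slice clauses) from $d$ to $u$, contradicting $\notinRtrans{d}{u}$; by Lemma~\ref{thm:convexSnap} it suffices to control the integer timestamps and interpolate linearly in between, using convexity of cells (Lemma~\ref{thm:isctConvexCell}) to keep the interpolated surface inside $\fsdIII$.

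Finally I would verify that the constructed $\mu$ is a genuine matching: for each $t$ it is a nondecreasing surjection $[0,P]\to[0,P]$ (immediate from it being the boundary of a staircase-closed region reaching from the $d$-side to just short of the $u$-side), it passes through $(0,0)$ and $(P,P)$ (since $(0,0)$ is on the $d$-boundary and $(P,P)$ on the $u$-boundary, and neither is in the $\R$-closure of the other), and it avoids every obstacle (any obstacle it crossed would be $\R$-below $\mu$ on one face and $\R$-above on another, forcing a $d$-to-$u$ chain). Then $d\in D$ and $u\notin D$ for this $\mu$, so $d\in D\implies u\in D$ is false — the desired contradiction — which establishes that $d\in D\implies u\in D$ implies $\inRtrans{d}{u}$.
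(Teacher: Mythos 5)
Your overall skeleton is the same as the paper's: argue the contrapositive, take $D$ to be the set of obstacles $\R$-reachable from $d$ (so $u\notin D$), define $\pi_t$ as the tightest monotone staircase lying above $\bigcup D$ in the slice at time $t$, and show that $\pi_t$ cannot properly intersect any obstacle because a point of an obstacle below $\pi_t$ would witness $\inR{a}{o}$ for some $a\in D$, forcing $o\in D$. That part of your proposal is sound.

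The gap is in the gluing step. Lemma~\ref{thm:convexSnap} and Lemma~\ref{thm:isctConvexCell} belong to the \matchIC setting: they control the intersection $\fsdII$ of all time slices, which is only relevant when the \emph{same} path must be used at every time. Here the path changes with $t$, and linearly interpolating $\pi_t$ and $\pi_{t+1}$ produces points $\bigl((1-s)x_0+sx_1,\,(1-s)y_0+sy_1,\,t+s\bigr)$ lying on segments that are parallel neither to the $xy$-plane nor to the $t$-axis, so Lemma~\ref{thm:convex} gives no guarantee that they stay in $\fsdIII$: an obstacle can occupy the interpolated position at time $t+s$ while both $\pi_t$ and $\pi_{t+1}$ are obstacle-free. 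The paper avoids this by defining $\pi_t$ for \emph{every real} $t$ directly from the single, globally fixed set $D$ (via the predicate $m(x,y,t)$), and the clause ``$t_a=t_b$'' of $\R$ is stated for arbitrary real $t$, so the no-proper-intersection argument applies in every slice, not just integer ones. Relatedly, you take the $\R$-closure of $d$ ``within slice $t$,'' which can produce different sets $D_t$ in different slices and hence a connected 3D obstacle lying below $\pi_t$ but above $\pi_{t'}$; fixing $D$ once as the reachable set in the full graph $(O,\R)$, as the paper does, is what makes the slices consistent and makes the touching clause do the work you attribute to it.
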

		\begin{proof}
			Suppose~$d\in D\implies u\in D$ but not~$\inRtrans{d}{u}$.
			Then no matching exists, and no path from~$d$ to~$u$ exists in the directed graph~$G=(O,\R)$.
			Pick as~$D$ the set of obstacles reachable from~$d$ in~$G$, then~$D$ does not contain~$u$.
			Pick the \emph{tightest} matching~$\mu$ such that~$D$ lies below it, we define~$\mu$ in terms of matchings~$\pi_t\subseteq \real^2\times\{t\}$ in the plane at each timestamp~$t$.
			$$(x,y,t)\in\pi_t \text{ \tIff~} (x'>x \land y'<y) \implies \neg m(x',y',t) \land m(x,y,t)\text{ where}$$
			$$m(x,y,t) \text{ \tIff~} \{(x',y',t)\mid x'\leq x\land y'\geq y \} \cap \bigcup D=\emptyset$$

			Because~$u\notin D$, this defines a monotone path~$\pi_t$ from~$(0,0)$ to~$(P,P)$ at each timestamp~$t$.
			Suppose that~$\pi_t$ \emph{properly} intersects some~$o\in O$, such that some point of~$(x_o,y_o,t)\in o$ lies below~$\pi_t$.
			It follows from the definition of~$\R$ and~$\neg m(x_o,y_o,t)$ that~$\inR{d}{o}$ for some~$d\in D$.
			However, such obstacle~$o$ cannot exist because~$D$ satisfies~$\R$.
			As a result, no path~$\pi_t$ intersects any obstacle and we can connect the paths~$\pi_t$ to obtain a continuous matching~$\mu$ without intersecting any obstacles.
			So~$\mu$ does not intersect obstacles in~$O\setminus D$, contradicting~$d\in D\implies u\in D$.
		\end{proof}
		\begin{thm}\label{thm:fdrel}
			The \FD is greater than~$\e$ \tIff~$\inRtrans{d}{u}$ for~$\e$.
		\end{thm}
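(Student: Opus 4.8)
The plan is to combine the two lemmas just proved (Lemma~\ref{lem:witnessIfNoD} and Lemma~\ref{lem:matchIfD}) with the earlier observation that a matching exists if and only if $\neg(d\in D\implies u\in D)$ for every admissible choice of $D$. Concretely, I would argue the contrapositive in both directions, using $\inRtrans{d}{u}$ as the pivot. Recall that $\inRtrans{d}{u}$ means there is a directed path $d=o_0\R o_1\R\cdots\R o_k=u$ in $G=(O,\R)$.

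First I would show the ``only if'' direction: if $\inRtrans{d}{u}$ holds, then the \FD exceeds~$\e$. Suppose for contradiction that the \FD is at most~$\e$, so some matching~$\mu$ exists; let $D\subseteq O$ be the set of obstacles below~$\mu$. As noted in the text, every matching satisfies $d\in D$ and $u\notin D$. But Lemma~\ref{lem:witnessIfNoD} applied to $\inRtrans{d}{u}$ gives $d\in D\implies u\in D$, hence $u\in D$, a contradiction. Therefore no matching exists and the \FD is greater than~$\e$.

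Next the ``if'' direction: suppose the \FD is greater than~$\e$, i.e.\ no matching exists. I want to conclude $\inRtrans{d}{u}$. It suffices to establish the hypothesis of Lemma~\ref{lem:matchIfD}, namely $d\in D\implies u\in D$ as an implication that holds for the particular reachability set; but more cleanly, I would phrase it as: if $\neg\inRtrans{d}{u}$ then a matching exists, which is exactly the contrapositive of Lemma~\ref{lem:matchIfD}. So assuming no matching exists, $\neg\inRtrans{d}{u}$ is impossible, giving $\inRtrans{d}{u}$. One subtlety to be careful about is that Lemma~\ref{lem:matchIfD} is literally stated as ``if $d\in D\implies u\in D$ then $\inRtrans{d}{u}$,'' where in its proof $D$ is chosen as the set reachable from~$d$; I would make sure the logical bookkeeping lines up, observing that for the reachability set $D$ we automatically have $d\in D$, so the antecedent $d\in D\implies u\in D$ reduces to $u\in D$, and its failure ($u\notin D$, i.e.\ $\neg\inRtrans{d}{u}$) is precisely what the constructive argument in Lemma~\ref{lem:matchIfD} turns into a matching.

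I expect the main (minor) obstacle to be purely presentational: making explicit that the quantification over $D$ is handled correctly, since ``a matching exists iff $\neg(d\in D\implies u\in D)$'' is really a statement about the existence of a suitable separating set $D$, and Lemma~\ref{lem:matchIfD} constructs a specific such $D$ from the graph~$G$. Once that is spelled out, the theorem is immediate: chaining $\text{(no matching)} \Leftrightarrow \text{(no valid }D\text{)} \Leftrightarrow \neg\inRtrans{d}{u}$ via the two lemmas closes the equivalence. No new geometric or combinatorial work is needed beyond Lemmas~\ref{lem:witnessIfNoD} and~\ref{lem:matchIfD}.
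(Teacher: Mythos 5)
Your proposal is correct and follows essentially the same route as the paper: one direction is Lemma~\ref{lem:witnessIfNoD} plus the fact that every matching has $d\in D$ and $u\notin D$, and the other is the contrapositive of Lemma~\ref{lem:matchIfD}, whose proof constructs the matching from the reachability set of~$d$. Your extra care about the quantification over~$D$ is a reasonable clarification of a point the paper's own two-line proof leaves implicit, but it is not a different argument.
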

		\begin{proof}
			We have for every matching that~$u\notin D$ and~$d\in D$.
			Therefore it follows from Lemma~\ref{lem:witnessIfNoD} that no matching exists if~$\inRtrans{d}{u}$ for~$\e$.
			In that case, the \FD is greater than~$\e$.
			Conversely, if~$\notinRtrans{d}{u}$ there is a set~$D$ satisfying~\Rtrans with~$u\notin D$ and~$d\in D$.
			In that case, a matching exists by Lemma~\ref{lem:matchIfD}, and the \FD is less than~$\e$.
		\end{proof}
		We choose the set of obstacles~$O'$ such that~$\bigcup O' = ([0,P]^2\times[0,T])\setminus \F_\e$ and the relation~$\R$ is easily computable.
		Note that due to Lemma~\ref{thm:convex}, each connected component contains a corner of a cell, so any cell in the freespace contains constantly many (up to eight) components of~$\bigcup O'$.
		Moreover, we can index each obstacle in~$O'$ by a grid point~$(x,y,t)\in\nat^3$.

		Let $o_{x,y,t}\subseteq ([0,P]^2\times[0,T])\cap([x-1,x+1]\times[y-1,y+1]\times[t-1,t+1])\setminus \F_\e$ be the maximal connected subset of the cells adjacent to $(x,y,t)$, such that $o_{x,y,t}$ contains $(x,y,t)$.
		Now, the obstacle~$o_{x,y,t}$ is not well-defined if $(x,y,t)\in\F_\e$, in which case we define~$o_{x,y,t}$ to be an empty (dummy) obstacle. We have~$O' = \bigcup_{(x,y,t)} \{o_{x,y,t}\}$ and we remark that obstacles are not necessarily disjoint.

		Each of the~$O(P^2T)$ obstacles is now defined by a constant number of vertices.
		We therefore assume that for each pair of obstacles~$(a,b)\in O^2$, we can decide in constant time whether~$\inR{a}{b}$; even though this decision procedure depends on the chosen distance metric.
		For each obstacle~$a=o_{x,y,t}$, there are~$O(P^2)$ obstacles~$b=o_{x',y',t'}$ for which~$\inR{a}{b}$, namely because~$t-2\leq t'\leq t+2$ if~$\inR{a}{b}$.
		Furthermore,~$u$ and~$d$ contribute to~$O(P^2T)$ elements of the relation.
		Therefore we can compute the relation~$\R$ in~$O(P^4T)$ time.

		Testing whether~$\inRtrans{d}{u}$ is equivalent to testing whether a path from~$d$ to~$u$ exists in the directed graph~$(O,\R)$, which can be decided using a depth first search.
		So we can solve the decision problem for the \FD in~$O(P^2T+|\R|)=O(P^4T)$ time.
		However, the relation~$\R$ may yield many unnecessary edges.
		In Section~\ref{sec:small} we show that a smaller set~$E$ of size~$O(P^3T)$ with the same transitive closure~$\Rtrans$ is computable in~$O(P^3T\log P)$ time, so the decision algorithm takes only~$O(P^3T\log P)$ time.

	\subsection{Parametric Search}\label{sec:paramDyn}
		\begin{wrapfigure}{r}{0pt}%
			\includegraphics{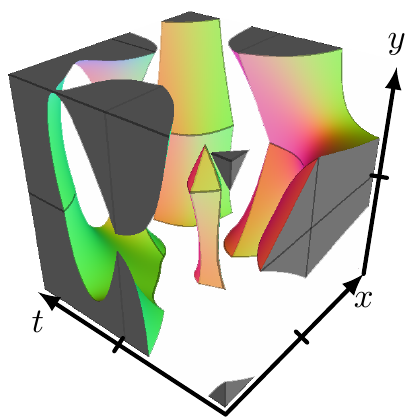}%
			\caption{$[0,2]^3\setminus \fsdIII$}%
			\label{fig:render}%
		\end{wrapfigure}

		To give an idea of what the 3D freespace looks like, we have drawn the obstacles of the eight cells of the freespace between two quadrilateral meshes of size~$P\times T=2\times 2$ in Figure~\ref{fig:render}.
		Cells of the 3D freespace lie within cubes, having six faces and twelve edges.
		We call such edges~$x$-,~$y$- or~$t$-edges, depending on the axis to which they are parallel.

		We are looking for the minimum value of~$\e$ for which a matching exists.
		When increasing the value of~$\e$, the relation~$\R$ becomes sparser since obstacles shrink.
		Critical values of~$\e$ occur when~$\R$ changes.
		Due to Lemma~\ref{thm:convex}, all critical values involve an edge or an~$xt$-face or~$yt$-face of a cell, but never the internal volume, so the following critical values cover all cases.

		\begin{enumerate}
			\item[a)]The minimal~$\e$ such that~$(0,0,t)\in\fsdIII$ and~$(P,P,t)\in\fsdIII$ for all~$t$.
			\item[b)]An edge of~$C_{x,y,t}$ becomes nonempty.
			\item[c)]
			Endpoints of~$y$-edges of~$C_{x,y,t}$ and~$C_{x+i,y,t}$ align in~$y$-coordinate, or
			endpoints of~$x$-edges of~$C_{x,y,t}$ and~$C_{x,y-j,t}$ align in~$x$-coordinate.
			\item[d)]Endpoints of a~$t$-edge of~$C_{x,y,t}$ and a~$t$-edge of~$C_{x+i,y-j,t}$ align in~$t$-coordinate.
			\item[e)]An obstacle in~$C_{x,y,t}$ stops overlapping with an obstacle in~$C_{x+i,y,t}$ or~$C_{x,y-j,t}$ when projected along the~$x$- or~$y$-axis.
		\end{enumerate}

		The endpoints involved in the critical values of type~a),~b),~c) and~d) can be captured in~$O(P^2T)$ functions.
		We apply a parametric search for the minimum critical value~$\e_\text{abcd}$ of type~a),~b),~c) or~d) for which a matching exists.
		This takes~$O((P^2T+\mtime_\text{dec})\log(PT))$ time.

		\begin{wrapfigure}{r}{0pt}%
			\includegraphics{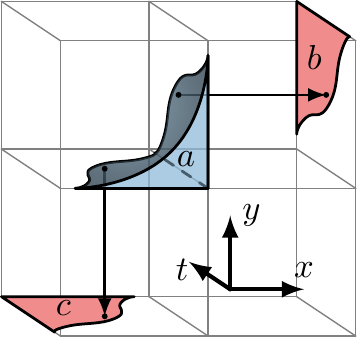}%
			\caption{$\inR{a}{b}$ and $\inR{a}{c}$}%
			\label{fig:overlap}%
		\end{wrapfigure}%
		We illustrate the need for critical values of type~e) in Figure~\ref{fig:overlap}, here obstacle~$a$ overlaps with both obstacles~$b$ and~$c$ while the overlap in edges does not contribute to~$\R$.
		It is unclear how critical values of type~e) can be incorporated in the parametric search directly.
		Instead, we enumerate and sort the~$O(P^3T)$ critical values of type~e) in~$O(P^3T\log(PT))$ time.
		Using~$O(\log(PT))$ calls to the decision algorithm, we apply a binary search to find the minimum critical value~$\e_\text{e}$ of type~e) for which a matching exists.
		Finding the critical value~$\e_\text{e}$ then takes~$O((P^3T+\mtime_\text{dec})\log(PT))$ time.
		The \matchID \FD is then the minimum of~$\e_\text{abcd}$ and~$\e_\text{e}$. This results in the following running time.
		\begin{thm}\label{thm:fdID}
		The \matchID \FD can be computed in $O((P^3T+\mtime_\text{dec})\log(PT))$ time.
		\end{thm}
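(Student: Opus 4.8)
The plan is to combine the decision procedure of Theorem~\ref{thm:fdrel} with a parametric search, treating the critical values of types~a)--d) by Cole's technique and the critical values of type~e) by an explicit sort followed by a binary search.

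First I would note that the decision problem is monotone in~$\e$: enlarging~$\e$ only enlarges~$\fsdIII$, so if a \matchID matching of cost at most~$\e$ exists, one exists for every larger value as well. Hence the \matchID \FD~$\e^*$ is the infimum of the set of values admitting a matching, and since~$\fsdIII$ is closed and (by the convention of allowing~$\mu$ from the closure of the matching space) this set is itself closed, the infimum~$\e^*$ is attained. Next I would argue that~$\e^*$ is one of the critical values of types~a)--e). Between two consecutive critical values, neither the relation~$\R$ on the obstacles nor the membership of the corners~$(0,0,t)$ and~$(P,P,t)$ in~$\fsdIII$ changes, so neither does the outcome of the test~$\inRtrans{d}{u}$; and by Lemma~\ref{thm:convex}, every combinatorial change of an obstacle, of a touching relation, or of a witness for the second clause of~$\R$ takes place on an edge or an~$xt$- or~$yt$-face of a cell rather than in a cell interior, which is precisely what the five cases enumerate.

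For the critical values of types~a), b), c), and~d), I would represent each relevant endpoint or gridpoint coordinate as an algebraic function of~$\e$, obtaining~$k=O(P^2T)$ such functions for which every type-a)--d) critical value is an intersection of two of them and any such intersection is computable in~$O(1)$ time. Cole's variant~\cite{cole1987slowing} of parametric search, driven by the decision algorithm, then returns the smallest such critical value~$\e_\text{abcd}$ for which a matching exists, in~$O((k+\mtime_\text{dec})\log k)=O((P^2T+\mtime_\text{dec})\log(PT))$ time.

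The main obstacle is that the type-e) critical values---at which an obstacle of~$C_{x,y,t}$ ceases to overlap, under projection along the~$x$- or~$y$-axis, with an obstacle of~$C_{x+i,y,t}$ or~$C_{x,y-j,t}$---cannot be written as the intersection of two functions from a small family, so the parametric search does not apply to them directly (see Figure~\ref{fig:overlap}). Instead I would enumerate them: there are~$O(P^2T)$ base cells, each paired with~$O(P)$ cells along the~$x$-axis and~$O(P)$ along the~$y$-axis, and for each such pair the value of~$\e$ at which the projected overlap disappears is computable in~$O(1)$ time, giving~$O(P^3T)$ candidate values, which can be sorted in~$O(P^3T\log(PT))$ time. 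A binary search over the sorted candidates, using~$O(\log(PT))$ calls to the decision algorithm, then isolates the smallest type-e) critical value~$\e_\text{e}$ for which a matching exists, in~$O((P^3T+\mtime_\text{dec})\log(PT))$ time. Since~$\e^*$ is a critical value of some type and is, by monotonicity, the smallest~$\e$ of any type admitting a matching, it equals~$\min(\e_\text{abcd},\e_\text{e})$, and outputting this minimum takes~$O((P^3T+\mtime_\text{dec})\log(PT))$ time overall; substituting the~$O(P^3T\log P)$ decision algorithm of Section~\ref{sec:small} for~$\mtime_\text{dec}$ gives the~$O(P^3T\log P\log(PT))$ bound of Figure~\ref{fig:results}.
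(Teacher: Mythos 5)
Your proposal is correct and follows the same route as the paper: capture the type-a)--d) critical values as intersections of $O(P^2T)$ functions and find $\e_\text{abcd}$ via Cole's parametric search, handle the type-e) values by explicit enumeration, sorting, and binary search over $O(\log(PT))$ decision calls to get $\e_\text{e}$, and return the minimum of the two. The additional justification you give for monotonicity and for why the optimum must be a critical value of one of the five types is consistent with (and slightly more explicit than) the paper's argument.
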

Before stating the final running time, we present a faster algorithm for the decision algorithm.
\subsection{A Faster Decision Algorithm}\label{sec:small}
	To speed up the decision procedure we distinguish the cases for which two obstacles may be related by~\R, these cases correspond to the five types of critical values of~Section~\ref{sec:paramDyn}.
	Critical values of type~a) and~b) depend on obstacles in single cells, so there are at most~$O(P^2T)$ elements of~\R arising from type~a) and~b).
	Critical values of type~c) and~e) arise from pairs of obstacles in cells in the same row or column, so there are at most~$O(P^3T)$ of them.
	In fact, we can enumerate the edges of type~a),~b),~c), and~e) of~\R in~$O(P^3T)$ time.
	On the other hand, edges of type~d) arise between two cells with the same value of~$t$, so there can be~$O(P^4T)$ of them.

	We compute a smaller directed graph~$(V,E)$ with~$|E|=O(P^3T)$ that has a path from~$d$ to~$u$ if and only if~$\inRtrans{d}{u}$.
	Let~$V=O=\{u,d\}\cup O'$ be the vertices as before (we will include dummy obstacles for grid points in that lie in the freespace) and transfer the edges in~\R except those of type~d) to the smaller set of edges~$E$.
	We must still induce edges of type~d) in~$E$, but instead of adding~$O(P^4T)$ edges, we use only~$O(P^3T)$ edges.
	The edges of type~d) can actually be captured in the transitive closure of~$E$ using only~$O(P)$ edges per obstacle in~$E$.

	Using an edge from~$o_{x,y,t}$ to~$o_{x+1,y,t}$ and to~$o_{x,y-1,t}$, we construct a path from~$o_{x,y,t}$ to any obstacle~$o_{x+i,y-j,t}$.
	The sole purpose of the dummy obstacles is to construct these paths effectively.
	For obstacles whose gridpoints have the same~$t$-coordinates, it then takes a total of~$O(P^2T)$ edges to include the obstacles overlapping in~$t$-coordinate related by type~d), this is valid because $(x,y,t)\in o_{x,y,t}$ for non-dummy obstacles.

	Denote by~$\E{k}$ the edges of type~d) of the form~$(a,b)=(o_{x,y,t_a},o_{x+i,y-j,t_b})$ where~$t_b=t_a+k$, then the set~\E{0} of~$O(P^2T)$ edges is the one we just constructed.
	Now it remains to induce paths with~$t_a\neq t_b$, that still overlap in~$t$-coordinates, i.e. the sets~\E{-2},~\E{-1},~\E{1} and~\E{2}.
	Denote by~$t^-(a)$ and~$t^+(a)$ the minimum and maximum~$t$-coordinate over points in an obstacle~$a$.
	For each obstacle, both the~$t^-(a)$ and the~$t^+(a)$ coordinates are an endpoint of a $t$-edge in a cell defining the obstacle due to Lemma~\ref{thm:convex}, and thus computable in constant time.
	
	\begin{wrapfigure}{r}{0pt}
		\includegraphics{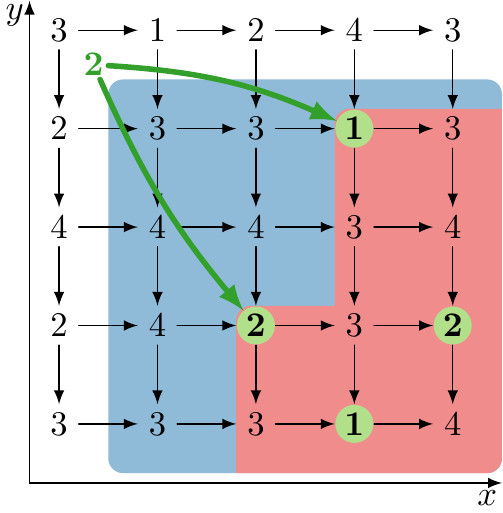}
		\caption{Two edges (green) cover (red) all four obstacles~$b$ (green) within the query rectangle (blue) with values~$t^-(b)\leq t^+(a)=2$.\label{fig:pareto}}
	\end{wrapfigure}
	Our savings arise from the fact that~\E{0} induces a path from~$o_{x+i,y-j,t+k}$ to~$o_{x+i',y-j',t+k}$ if $\inR{o_{x,y,t}}{o_{x+i,y-j,t+k}}$ and $\inR{o_{x,y,t}}{o_{x+i',y-j',t+k}}$ with~$i\leq i'$ and~$j\leq j'$, so we do not need an additional edge to induce a path to the latter obstacle.
	To avoid degenerate cases, we start by exhaustively enumerating edges of~\E{k} ($k\in\{-2,-1,1,2\})$ for which~$i\leq 1$ or~$j\leq 1$ in~$O(P^3T)$ time so we need only consider edges with~$i\geq 2\land j\geq 2$.

	For these remaining cases, we have~$\inR{a}{b}$ \tIff~$t^+(a)\geq t^-(b)\land t_b=t_a+k$, and~$t^-(a)\leq t^+(b)\land t_b=t_a-k$ for positive~$k$.
	From this we can derive the edges of~\E{k}.
	Although for each~$a$, there may be~$O(P^2)$ obstacles~$b$ such that~$\inR{a}{b}$ with~$t_b=t_a+k$, the Pareto frontier of those obstacles~$b$ contains only~$O(P)$ obstacles, see the grid of fictional values~$t^-(b)$ in~Figure~\ref{fig:pareto}.
	In Section~\ref{sec:pareto}, we show how to find these Pareto frontiers in~$O(P\log P)$ time per obstacle~$a$, using only~$O(P^2T)$ preprocessing time for the complete freespace.

	As a result, we can compute all~$O(P^3T)$ edges of~\E{k} in~$O(P^3T\log P)$ time.
	By Theorem~\ref{thm:small}, the decision problem for the \matchID \FD is solvable in~$O(P^3T\log P)$ time.

	\newcommand{\thmSmall}{The decision problem for the \matchID~\FD is solvable in~$O(P^3T\log P)$ time.}
	\begin{thm}\label{thm:small}\thmSmall\end{thm}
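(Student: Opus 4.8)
The plan is to assemble the running time from the pieces that the preceding subsection has already laid out, so the proof is essentially a careful bookkeeping argument organized by the five types of critical values (equivalently, the five kinds of dependencies in~\R). First I would recall that the decision algorithm, given a value of~$\e$, builds a directed graph~$(V,E)$ on the~$O(P^2T)$ obstacles (including dummy obstacles for grid points in the freespace) and answers ``yes'' exactly when no path from~$d$ to~$u$ exists; by Theorem~\ref{thm:fdrel} and Lemmas~\ref{lem:witnessIfNoD} and~\ref{lem:matchIfD} this graph is correct provided the transitive closure of~$E$ agrees with~\Rtrans on the pair~$(d,u)$. So the entire task reduces to: (i) construct in~$O(P^3T\log P)$ time an edge set~$E$ of size~$O(P^3T)$ whose transitive closure contains~\R (restricted to what matters for reachability between~$d$ and~$u$), and (ii) run a single depth-first search, which costs~$O(|V|+|E|)=O(P^3T)$.

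For step~(i) I would split~$E$ by the dependency type. Edges of type~a) and~b) involve obstacles within a single cell, of which there are~$O(P^2T)$; edges of type~c) and~e) involve pairs of obstacles in cells sharing a row or column, giving~$O(P^3T)$ of them, and all of these can be enumerated directly in~$O(P^3T)$ time. The only problematic type is~d), the dependencies between cells with the same~$t$-value, of which there could be~$\Theta(P^4T)$. Here I would invoke the construction already described: include the ``grid'' edges from~$o_{x,y,t}$ to~$o_{x+1,y,t}$ and to~$o_{x,y-1,t}$ (only~$O(P^2T)$ of them), which through the dummy obstacles give monotone paths realizing every~\E{0} dependency; this handles all same-timestamp type-d) edges with~$t_a=t_b$. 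For the remaining cases~$\E{k}$ with~$k\in\{-2,-1,1,2\}$ I would first enumerate exhaustively the~$O(P^3T)$ edges with~$i\le 1$ or~$j\le 1$, and then for each obstacle~$a$ use the monotonicity-of-$\R$ observation to replace the~$O(P^2)$ potential targets by their~$O(P)$-size Pareto frontier (with respect to the index pair~$(i,j)$, thresholded by the comparison of~$t^+(a)$ against~$t^-(b)$ or~$t^-(a)$ against~$t^+(b)$); Section~\ref{sec:pareto} supplies these frontiers in~$O(P\log P)$ per obstacle after~$O(P^2T)$ preprocessing. Summing: $O(P^2T)$ obstacles times $O(P\log P)$ gives $O(P^3T\log P)$, and the total number of edges is $O(P^3T)$.

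Finally I would check that the edges actually produced induce exactly the right transitive closure: the~\E{0} grid edges realize all~$t_a=t_b$ type-d) dependencies because $(x,y,t)\in o_{x,y,t}$ for non-dummy obstacles and the dummy obstacles are present precisely to thread these paths; and for $k\ne 0$, omitting an edge to a dominated target~$b'$ is safe because the edge to the dominating target~$b$ together with an~\E{0} path from~$b$ to~$b'$ already witnesses the dependency. Hence the transitive closure of~$E$ contains~\R, and since~$E\subseteq\Rtrans$ trivially the two closures coincide, so the depth-first search correctly decides~$\inRtrans{d}{u}$. Combining the $O(P^3T\log P)$ construction time with the $O(P^3T)$ search time yields the claimed bound. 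The main obstacle, and the only place where anything nontrivial happens, is the Pareto-frontier argument for type-d) edges with $k\ne 0$: one must be sure that the frontier really has size $O(P)$ under the correct partial order and that dominated targets are genuinely reachable via~\E{0}; everything else is accounting against the cell/row/column structure guaranteed by Lemma~\ref{thm:convex}.
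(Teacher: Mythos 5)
Your proposal is correct and follows essentially the same route as the paper: enumerate the type a), b), c), e) edges directly in $O(P^3T)$ time, realize type d) dependencies via the $\E{0}$ grid edges through dummy obstacles plus Pareto-frontier queries for $\E{k}$, $k\in\{-2,-1,1,2\}$, and finish with a single graph search on the resulting $O(P^3T)$-edge graph. The paper's own proof is merely a brief summary of this same construction, so your more detailed accounting adds no new ideas but omits none either.
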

	\begin{proof}
	The edges~$E$ of types other than~d) are enumerated in~$O(P^3T)$ time, and using constantly many Pareto frontier queries for each obstacle,~$O(P^3T)$ edges of type~d) in~$E$ are computed in~$O(P^3T\log P)$ time.
	Given the set~$E$ of edges, deciding whether a path between two vertices exists takes~$O(|E|)=O(P^3T)$ time.
	The transitive closure of~$E$ equals~$\Rtrans$, so a path from~$d$ to~$u$ exists in~$E$ \tIff there was such a path in~$\R$.
	Since we compute~$E$ in~$O(P^3T\log P)$ time, the decision problem is solved in~$O(P^3T\log P)$ time.
	\end{proof}
The following immediately follows from Theorems~\ref{thm:fdID} and~\ref{thm:small}.
\begin{cor}\label{cor:runningtime}
The \matchID \FD can be computed in time\linebreak $O(P^3T\log P\log(PT))$.
\end{cor}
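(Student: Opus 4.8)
The plan is to derive the bound purely by combining the two preceding theorems; no new construction is required. First I would unpack what Theorem~\ref{thm:fdID} delivers: the \matchID \FD is the minimum of two thresholds, $\e_\text{abcd}$, found by a parametric search over the $O(P^2T)$ functions encoding the critical values of types a)--d), and $\e_\text{e}$, found by enumerating and sorting the $O(P^3T)$ critical values of type~e) and binary searching over them; together these cost $O((P^3T+\mtime_\text{dec})\log(PT))$ time, where $\mtime_\text{dec}$ denotes the running time of a single call to the decision algorithm and is the only quantity left symbolic in the bound.

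Second, I would substitute the sharpened decision-time bound from Theorem~\ref{thm:small}, namely $\mtime_\text{dec}=O(P^3T\log P)$. Recall that this is obtained by replacing the relation~$\R$ (which may contain $\Theta(P^4T)$ pairs, the type~d) edges being the bottleneck) with the sparser graph~$(V,E)$ of size $|E|=O(P^3T)$ and identical transitive closure, where the type~d) edges are induced through the $O(P)$-edges-per-obstacle Pareto-frontier construction of Section~\ref{sec:pareto}, and where a depth first search then answers reachability from $d$ to $u$ in $O(|E|)$ time. Plugging $\mtime_\text{dec}=O(P^3T\log P)$ into Theorem~\ref{thm:fdID} yields $O\bigl((P^3T+P^3T\log P)\log(PT)\bigr)=O(P^3T\log P\log(PT))$, the stated bound.

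I do not expect any genuine obstacle here---the argument is a substitution---but for completeness I would check that the two terms in Theorem~\ref{thm:fdID} that are independent of $\mtime_\text{dec}$, the $O(P^2T\log(PT))$ from the type a)--d) parametric search and the $O(P^3T\log(PT))$ from sorting the type-e) critical values, are both dominated by $O(P^3T\log P\log(PT))$; since neither carries an extra $\log P$ factor this is immediate. Hence the corollary follows at once from Theorems~\ref{thm:fdID} and~\ref{thm:small}.
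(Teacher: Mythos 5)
Your argument is exactly the paper's: the corollary is obtained by substituting the bound $\mtime_\text{dec}=O(P^3T\log P)$ from Theorem~\ref{thm:small} into the running time $O((P^3T+\mtime_\text{dec})\log(PT))$ of Theorem~\ref{thm:fdID}. The substitution and the check that the remaining terms are dominated are both correct, so the proposal is sound and matches the intended proof.
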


	\subsection{Pareto Frontier Queries}\label{sec:pareto}
		Suppose we are given a matrix~$M$ with~$m$ columns and~$n$ rows, and we want to be efficiently query submatrices for the Pareto frontier of numbers that are at most a given threshold value,~$t$.
		A query specifies the threshold~$t$, and two coordinates~$(x_0,y_0)$ and~$(x_1,y_1)$ of the query rectangle~$R=\{x_0,\dots,x_1\}\times\{y_0,\dots,y_1\}$.
		Let~$C_t(R)=\{(x,y)\in R\mid M[x,y]\leq t\}$ denote the coordinates of the query rectangle that must be dominated by the Pareto frontier~$F_t(R)\subseteq C_t(R)$.
		That is, if~$(x,y)\in C_t(R)$, then some~$(x',y')\in F_t(R)$ with~$x'\leq x \land y'\leq y$ exists.

		We preprocess each of the~$O(n)$ rows of~$M$ in~$O(m)$ time by storing their cells as the leaves of an augmented binary tree, whose internal nodes store the minimum value over its subtrees. 
		Then queries for the index of the leftmost element with value at most~$t$ in a range~$\{x_0,\dots,x_1\}$ of that row can be answered in~$O(\log m)$ time.
		We can compute~$F_t$ by including for each row, the element with minimum~$x$-coordinate and value at most~$t$ in the query range.
		So using~$O(n)$ queries, we compute a set~$F_t(R)$ of size~$O(n)$ in~$O(n\log m)$ time, using~$O(nm)$ preprocessing time for the matrix~$M$.

		In this case,~$F_t(R)$ is not actually the Pareto frontier since some of its elements might be dominated by other elements.
		With a slight modification, we can make~$F_t(R)$ the actual Pareto frontier (of minimum size).
		If the query with range~$\{x_0,\dots,x_1\}$ returns~$x^*$ with~$x_0\leq x^*\leq x_1$ for some row, then the query range for subsequent rows can be restricted to~$\{x_0,\dots,x^*-1\}$, so no unnecessary values are generated.

	\section{Hardness}\label{sec:hardness}
		\begin{wrapfigure}{l}{0pt}\homCC\hspace{1em}\homCD\end{wrapfigure}
		We extend the \matchIC and \matchID classes of matchings (of Sections~\ref{sec:constant} and~\ref{sec:dynamic}) to asynchronous ones.
		For this, we allow realignments of timestamps, giving rise to the \matchCC and \matchCD classes of matchings.
		The \matchCC class ranges over matchings of the form~$\mu(p,t)=(\pi(p),\tau(t))$ where the~$\pi$ and~$\tau$ are matchings of positions and timestamps.
		The \matchCD class of matchings has the form~$\mu(p,t)=(\pi_t(p),\tau(t))$ for which the positional matching~$\pi_t$ changes over time.
		We first prove that the \matchCC~\FD is in~NP.
		\begin{thm}\label{thm:NP}
			Computing the \FD is in NP for the \matchCC class of matchings.
		\end{thm}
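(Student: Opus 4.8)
The plan is to exhibit a polynomial-size certificate for the decision problem "is the \matchCC \FD at most $\e$?" and argue it can be verified in polynomial time. A \matchCC matching has the form $\mu(p,t)=(\pi(p),\tau(t))$ with $\pi$ a nondecreasing surjection of $[0,P]$ and $\tau$ a nondecreasing surjection of $[0,T]$. The key structural observation I would use is that, once $\tau$ is fixed, the problem reduces to a \matchIC instance: after warping the time axis of $B$ by $\tau$, we are left with two moving curves that must be matched by a single positional reparameterization $\pi$ constant in time, which is exactly what Section~\ref{sec:constant} solves in polynomial time. So the nondeterministic step is only to guess $\tau$ — but $\tau$ ranges over a continuum, so I must argue that it suffices to guess a combinatorial description of $\tau$ of polynomial size.

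The main step is therefore to show that the optimal $\tau$ can be taken to be piecewise linear with polynomially many breakpoints, each breakpoint occurring at a combinatorially determined position. I would argue this as follows. The curves $A$ and $B$ are quadrilateral meshes, so $A(x,t)$ and $B(y,t)$ are piecewise bilinear; the warping $t\mapsto\tau(t)$ interacts with the $T$ timestamp-rows of $B$. Between consecutive "events" — values of $t$ at which $\tau(t)$ crosses an integer timestamp of $B$, i.e. where a row of the $B$-mesh is entered or left — the composed surface $B(y,\tau(t))$ is bilinear in $(y,t)$, so on each such $t$-interval we are intersecting freespace slabs of the type analyzed in Lemmas~\ref{thm:convex}--\ref{thm:convexSnap}. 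Hence the only relevant information about $\tau$ is the sequence of integer $B$-timestamps visited and, for each maximal $t$-interval on which $\tau$ stays between two consecutive integers, the pair of endpoint values of $\tau$ on that interval — because within such an interval only the (affine) values at the interval's endpoints matter, by the convexity of freespace cells in the $t$-direction (Lemma~\ref{thm:convex}). One then shows that an optimal $\tau$ need only change slope when it passes an integer $A$-timestamp or an integer $B$-timestamp, giving $O(P{+}T)$ breakpoints; the certificate is the combinatorial sequence of these crossings (which $A$-timestamps are matched to which $B$-timestamps), which has size $O(P{+}T)$.

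Given such a certificate, verification proceeds by: (i) reconstructing the piecewise-linear skeleton of $\tau$ from the crossing sequence; (ii) subdividing both meshes along the induced timestamps as in the remark after Section~\ref{sec:identity}, producing meshes of size $O((P)(T{+}|\tau|))=O(P(P{+}T))$ that are timestamp-aligned; (iii) running the \matchIC decision algorithm of Section~\ref{sec:constant} on the subdivided instance to check whether a constant positional matching $\pi$ realizing \FD $\le\e$ exists. All three steps are polynomial, and by the structural argument the answer over all certificates equals the true \matchCC decision. To finish, I would note that the optimal $\e$ itself is one of polynomially many critical values (intersections of algebraic functions arising from the mesh vertices), so guessing $\e$ — or rather verifying "$\le\e$" for the target rational $\e$ — stays within NP; alternatively one phrases the whole thing as the decision problem for a fixed rational $\e$, which is the standard NP formulation.

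The hard part will be step two of the middle paragraph: pinning down exactly which breakpoints an optimal $\tau$ needs and proving that restricting to that finite family loses nothing. The delicate point is that $\tau$ being nondecreasing but not strictly increasing means intervals of $t$ can be collapsed to a single $B$-timestamp (and vice versa, a single $t$ stretched across a range of $B$-time), so the "events" must be defined carefully as crossings of a moving point through the integer grid, and one must rule out the possibility that a non-piecewise-linear $\tau$, or one with super-polynomially many slope changes, does strictly better. I expect the convexity lemma (Lemma~\ref{thm:convex}) to do the real work here, exactly as it did in reducing \fsdIII to \fsdII: on each grid-cell-interval of $t$ only the endpoint behavior of $\tau$ matters, so a piecewise-linear $\tau$ interpolating those endpoints is no worse.
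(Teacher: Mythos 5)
Your proposal is correct and follows essentially the same route as the paper: the certificate is a polynomial-size piecewise-linear temporal matching $\tau^*$, whose sufficiency is justified by the convexity of freespace cells in the $t$-direction (Lemma~\ref{thm:convex}), and verification realigns the meshes under $\tau^*$ and runs the polynomial-time \matchIC decision algorithm. You are in fact more explicit than the paper about the delicate step (bounding the breakpoints of an optimal $\tau$ and handling non-strictly-increasing $\tau$), which the paper's proof asserts in one line.
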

		\begin{proof}
			Given any matching~$\mu(p,t)=(\pi(p),\tau(t))$ with a \FD of~$\e$, we can derive---due to Lemma~\ref{thm:convex}---a piecewise-linear matching~$\tau^*$ in~$O(T)$ time, such that a matching~$\mu^*(p,t)=(\pi^*(p),\tau^*(t))$ with \FD at most~$\e$ exists.
			We can realign the quadrilateral meshes~$A$ and~$B$ under~$\tau^*$ to obtain meshes~$A^*$ and~$B^*$ of polynomial size.
			Now the polynomial-time decision algorithm for \matchIC matchings (see full paper) is applicable to~$A^*$ and~$B^*$.
		\end{proof}
		Due to critical values of type e), it is unclear whether each \matchCD matching admits a piecewise-linear matching~$\tau^*$ of polynomial size, which would mean that the \matchCD~\FD is also in~NP.

		We show that computing the \FD is NP-hard for both classes by a reduction from~3-SAT.
		The idea behind the construction is illustrated in the two height maps of Figure~\ref{fig:terrainsNPC}.
		These represent quadrilateral meshes embedded in~$\real^1$ and correspond to a single clause of a~3-CNF formula of four variables.

		\begin{figure}[ht!]\centering%
			\begin{minipage}{0.85\columnwidth}
				\includegraphics[scale=.9]{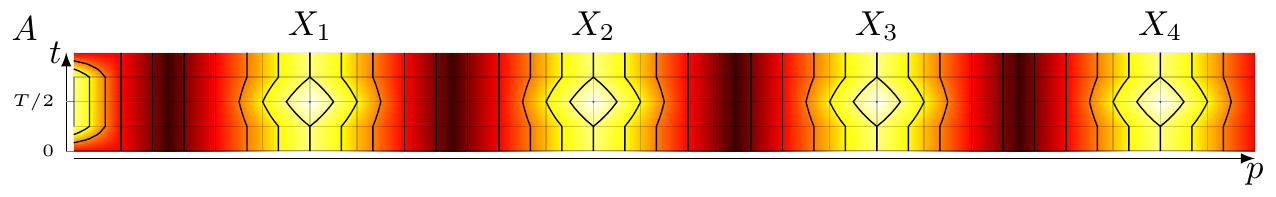}\\
				\includegraphics[scale=.9]{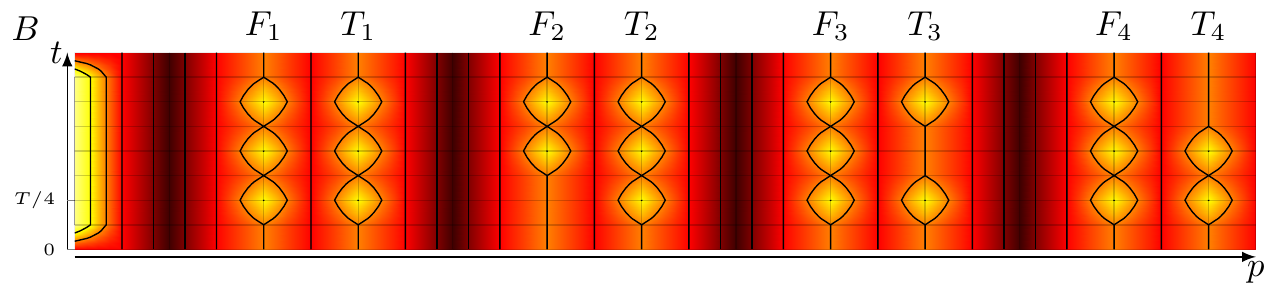}
			\end{minipage}%
			\begin{minipage}{0.05\columnwidth}
				\includegraphics[scale=.9]{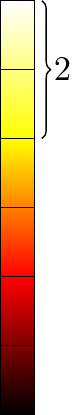}
			\end{minipage}
			\caption{Two quadrilateral meshes~$A$ and~$B$ embedded in~$\real^1$ (indicated by color and isolines). Their \FD is~$2$ isolines if the clause~$(X_2\lor\neg X_3\lor\neg X_4)$ is satisfiable and~$3$ isolines otherwise.\label{fig:terrainsNPC}}
		\end{figure}
		\begin{figure}[ht!]\centering%
			\includegraphics[width=.4\textwidth]{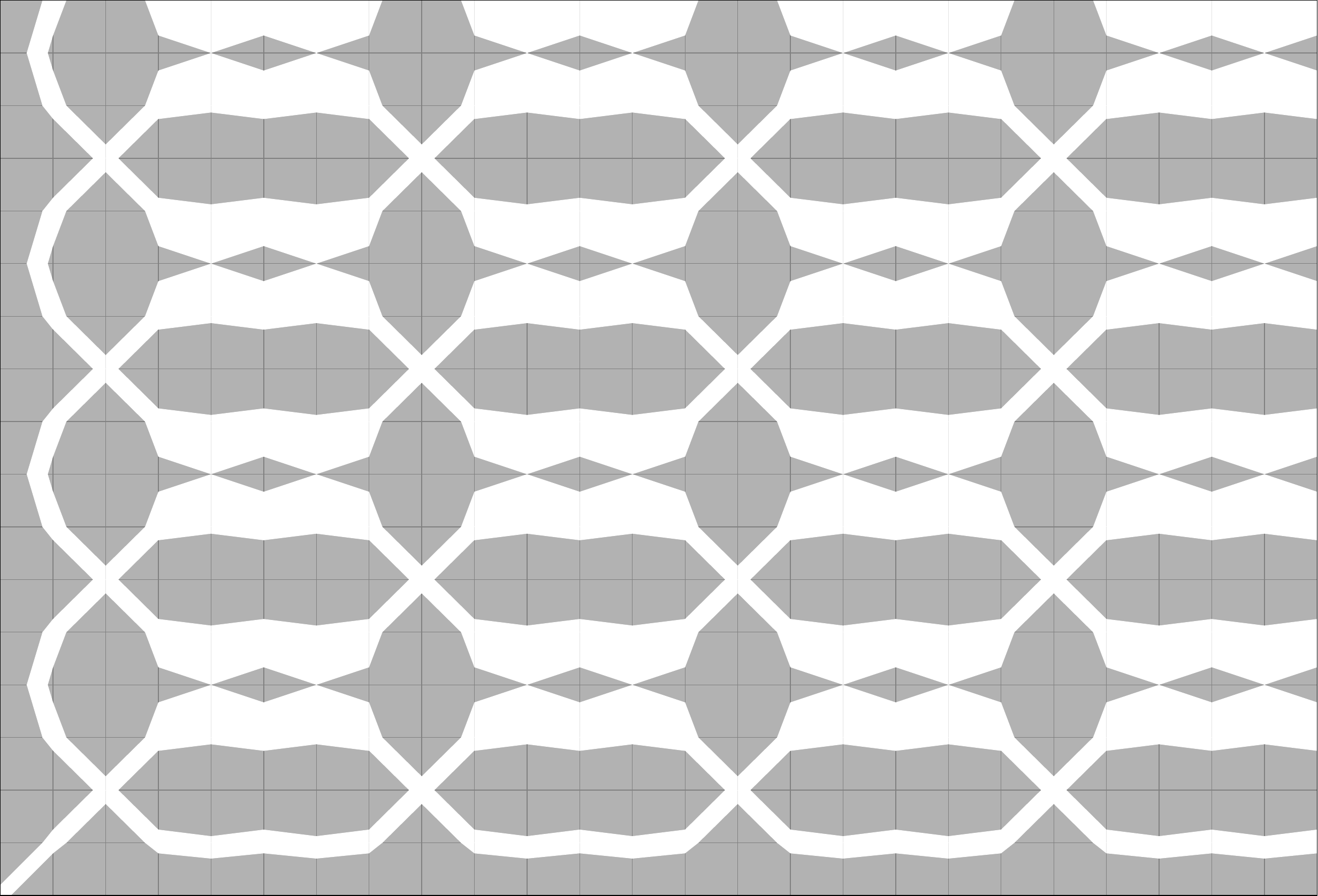}\qquad%
			\includegraphics[width=.4\textwidth]{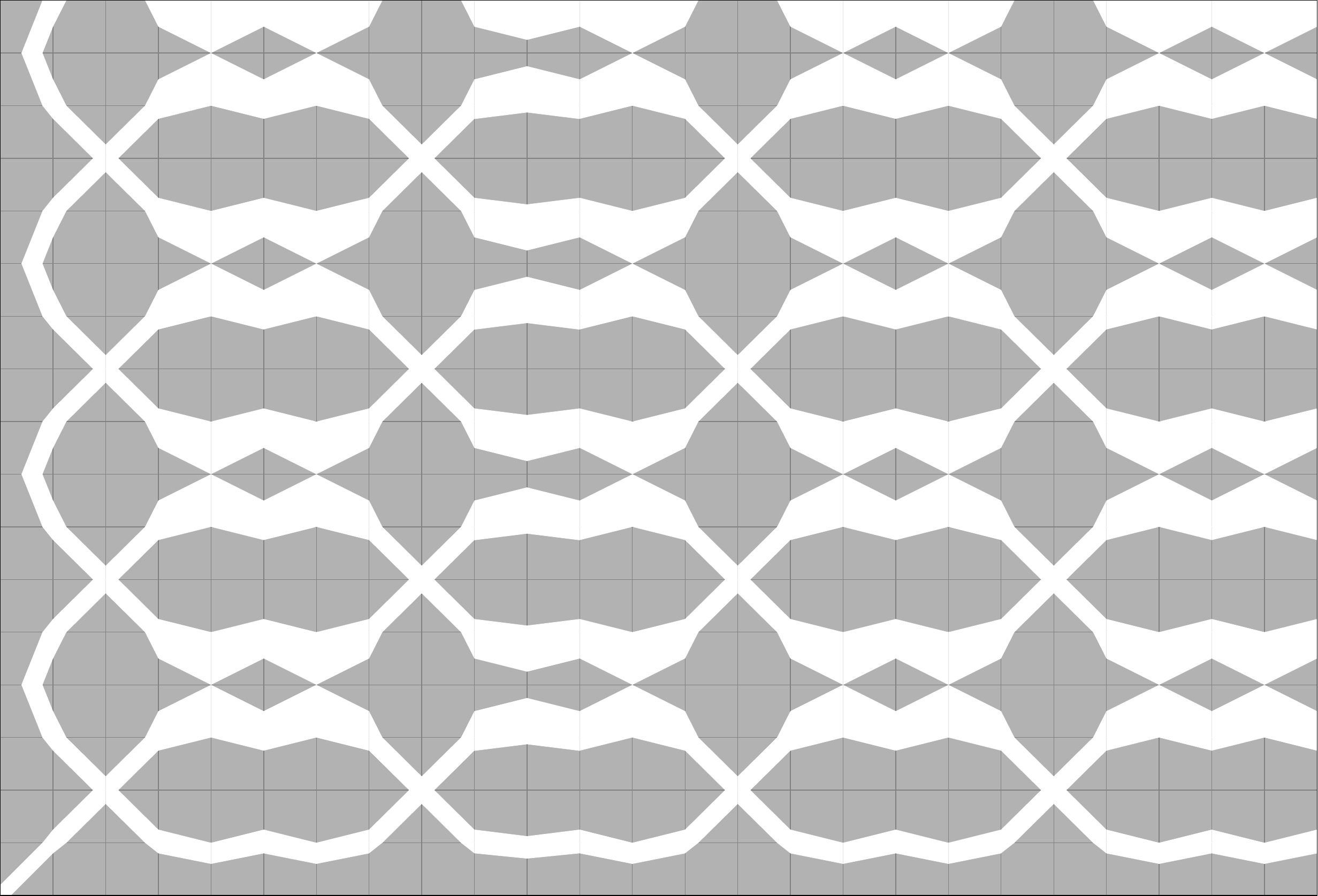}%
			\caption{The freespace~$\F_{\e=2}$ of $(A,B)$ at times $(0,0)$ (left) and $(T/2,T/4)$ (right).\label{fig:FSDterrainsNPC}}
		\end{figure}

		We distinguish valleys (dark), peaks (white on~$A$, yellow on~$B$) and ridges (denoted~$X_i$,~$F_i$ and~$T_i$).
		An important observation is that in order to obtain a low \FD of~$\e<3$, the $n$-th valley of~$A$ must be matched with the~$n$-th valley of~$B$.
		Moreover, each ridge~$X_i$ must be matched with~$F_i$ or~$T_i$ and each peak of~$A$ must be matched to a peak of~$B$.
		Note that even for \matchCD matchings, if~$X_i$ is matched to~$F_i$, it cannot be matched to~$T_i$ and vice-versa because the (red) valley separating~$F_i$ and~$T_i$ has distance~$3$ from~$X_i$.

		The aforementioned properties are reflected more clearly in the 2D freespace between the curves at aligned timestamps~$t$ and~$\tau(t)$.
		In Figure~\ref{fig:FSDterrainsNPC}, we give two 2D slices (with $(t_A,t_B)=(0,0)$ and~$(T/2,T/4)$, respectively) of the 4D-freespace diagram with~$\e=2$ for the shown quadrilateral meshes.
		In this diagram with~$\e=2$, only~$2^3$ monotone paths exist (up to directed homotopy) whereas for~$\e=3$ there would be~$2^4$ monotone paths (one for each assignment of variables).
		For~$\e=2$, the peak of~$X_2$ cannot be matched to~$F_2$ at~$t=T/4$ of~$B$, corresponding to an assignment of~$X_2=\textit{true}$.

		Consider a 3-CNF formula with~$n$ variables and~$m$ clauses, then~$A$ and~$B$ consist of~$m$ clauses along the~$t$-axis and~$n$ variables ($X_1\dots X_n$ and~$F_1,T_1\dots F_n,T_n$) along the~$p$-axis.
		The~$k$-th clause of~$A$ is matched to the~$k$-th clause of~$B$ due to the elevation pattern on the far left~($p=0$).
		This means that the peaks of~$A$ are matched with peaks of the same clause on~$B$ and all these peaks have the same timestamp because~$\tau(t)$ is constant (independent of~$p$).

		For each clause, there are three rows (timestamps) of~$B$ with peaks on the ridges.
		On each such timestamp, exactly one ridge (depending on the disjuncts of the clause) does not have a peak.
		Specifically, if a clause has~$X_i$ or~$\neg X_i$ as its~$k$-th disjunct, then the~$k$-th row of that clause has no peak on ridge~$F_i$ or~$T_i$, respectively.
		We use these properties in Theorem~\ref{thm:fdApx} where we prove that it is NP-hard to approximate the~\FD within a factor~$1.5$.

		\begin{lem}\label{thm:fd3}
		The~\FD between two such moving curves is at least~$3$ if the corresponding~3-CNF formula is unsatisfiable.
		\end{lem}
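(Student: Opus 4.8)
The plan is to show the contrapositive-friendly statement: if the \FD is less than $3$ (in fact, at most some value $\e < 3$, say $\e = 2$), then the associated 3-CNF formula is satisfiable. So suppose a matching $\mu$ with \FD at most $\e = 2$ exists. First I would extract the structural constraints this matching must satisfy from the geometry described above. Because the leftmost column ($p = 0$) of both meshes has an elevation pattern that forces the $k$-th valley of $A$ to match the $k$-th valley of $B$, and because valleys are separated by ridges at mutual distance $3$ from each other's peaks, the matching must send clause $k$ of $A$ to clause $k$ of $B$; moreover $\tau$ is constant in $p$, so the temporal matching respects clause blocks. Within a clause block, each ridge $X_i$ of $A$ must be matched to either $F_i$ or $T_i$ of $B$ (it cannot straddle both, since the red valley between $F_i$ and $T_i$ is at distance $3$ from $X_i$, exceeding $\e$), and each peak of $A$ must be matched to a peak of $B$.

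Next I would read off a truth assignment from the matching: set $X_i = \mathit{true}$ if ridge $X_i$ is matched to $T_i$, and $X_i = \mathit{false}$ if it is matched to $F_i$. Since the matching forces exactly one of these two for each $i$ (consistently across all timestamps, because the positional matching at any fixed time is a monotone surjection and the obstacle structure is the same for all timestamps within a clause block), this assignment is well-defined. Then I would verify that this assignment satisfies every clause. Fix clause $k$. For each of its three disjuncts, say the $j$-th disjunct is $X_i$ (the case $\neg X_i$ is symmetric), there is a row (timestamp) of $B$ in clause block $k$ on which ridge $F_i$ has \emph{no} peak, while ridge $X_i$ of $A$ \emph{does} have a peak at the corresponding timestamp. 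The peak of $A$ on $X_i$ at that timestamp must be matched to some peak of $B$; since $F_i$ offers no peak there and $X_i$ can only be matched to $F_i$ or $T_i$, the peak must go to $T_i$, forcing $X_i = \mathit{true}$, which satisfies the clause. If instead all three disjuncts of clause $k$ were falsified by the assignment, then on each of the three special rows of clause block $k$ the peak of $A$ that is ``missing'' on $B$ would be on a ridge the matching is committed to, producing a pair of matched points (an $A$-peak and a $B$-valley, or more precisely an $A$-peak forced onto a location without a $B$-peak) at distance exceeding $\e = 2$ — in fact at distance $3$ — contradicting the \FD bound. Hence at least one disjunct is satisfied, so clause $k$ is satisfied, and since $k$ was arbitrary the formula is satisfiable.

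I expect the main obstacle to be making the ``forced matching'' arguments fully rigorous for the \matchCD class, where the positional matching $\pi_t$ may vary continuously with $t$. One must argue that even with this freedom, the combinatorial commitments (clause $k \to$ clause $k$, ridge $X_i \to F_i$ xor $T_i$, and this choice being \emph{the same} for all $t$ within a clause block) cannot be evaded — essentially a connectedness/continuity argument: switching $X_i$ from $F_i$ to $T_i$ as $t$ varies would require $\pi_t$ to sweep the ridge $X_i$ across the intervening red valley, but that valley sits at distance $3 > \e$, so no such continuous deformation stays inside the $\e$-freespace. The 2D freespace slices of Figure~\ref{fig:FSDterrainsNPC}, together with the bound that only $2^3$ directed-homotopy classes of monotone paths survive at $\e = 2$, are exactly the tool that pins down these commitments; I would lean on the convexity of freespace cells (Lemma~\ref{thm:convex}) to reduce the continuous statement to a finite check on the boundary slices. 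Once the forced structure is in hand, translating it into the truth assignment and checking clause satisfaction is routine bookkeeping over the explicit peak/ridge pattern of the construction.
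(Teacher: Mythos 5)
Your proposal is correct and follows essentially the same route as the paper's proof: take the contrapositive, use the forced structural commitments (clause $k$ to clause $k$, each ridge $X_i$ to exactly one of $F_i$ or $T_i$, peaks to peaks) to read off a truth assignment, and argue that the single row of $B$-peaks matched to the $A$-peaks of each clause forces one disjunct to be satisfied. One small caveat: your first pass at the clause check (``for each of its three disjuncts \dots forcing $X_i=\mathit{true}$'') overclaims, since only the disjunct of the row actually hit by the matching is forced, but your subsequent contradiction version is exactly the paper's argument and is the one that works.
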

		\begin{proof}
		Consider a matching yielding a~\FD smaller than~$3$ given an unsatisfiable formula, then the peaks of~$A$ (of the~$k$-th clause) are matched with peaks of~$B$~(of a single row of the~$k$-th clause).
		Assign the value \textit{true} to variable~$X_i$ if ridge~$X_i$ is matched with~$T_i$ and \textit{false} if it is matched with~$F_i$.
		Then for every clause~$(V_i\lor V_j\lor V_k)$ with~$V_i\in\{X_i,\neg X_i\}$, there is a peak at~$\pi(X_i)$,~$\pi(X_j)$ or~$\pi(X_k)$ for that clause.
		Such a matching cannot exist because then the~3-CNF formula would be satisfiable, so the~\FD is at least~$3$.
		\end{proof}
		\begin{lem}\label{thm:fd2}
		The~\FD between two such moving curves is at most~$2$ if the corresponding~3-CNF formula is satisfiable.
		\end{lem}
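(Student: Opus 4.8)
The plan is to exhibit an explicit matching $\mu(p,t)=(\pi(p),\tau(t))$ and verify that it keeps every pair of corresponding points within distance $2$. Given a satisfying assignment of the $3$-CNF formula, I would first set the temporal matching: for each clause $k$, the $k$-th clause-block of $A$ is matched to the $k$-th clause-block of $B$ via $\tau$, and within that block $\tau$ selects the single row (of the three rows carrying peaks on the ridges) that corresponds to the satisfied disjunct. Since at least one disjunct is true, such a row exists; concretely, if disjunct $j$ of clause $k$ is $X_i$ (resp. $\neg X_i$) and is satisfied, we can use the row of that clause whose missing peak is on $T_i$ (resp. $F_i$) — so that the ridge $X_i$, matched to $T_i$ (resp. $F_i$) by $\pi$, lines up with a peak on $B$.

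Next I would define the positional matching $\pi$ (here constant in $t$, so the construction works for both the \matchCC and the \matchCD classes, which is what the theorem needs): map ridge $X_i$ of $A$ to ridge $T_i$ of $B$ if the assignment sets $X_i$ to \emph{true}, and to $F_i$ otherwise, map the $n$-th valley of $A$ to the $n$-th valley of $B$, and map each peak of $A$ onto the corresponding peak of $B$ in the chosen row. Interpolate $\pi$ linearly (monotonically) between these anchor points, and similarly for $\tau$ between the block/row anchors. Because $A$ and $B$ are embedded in $\real^1$ and described by their height maps, verifying $\|A(p,t)-B(\pi(p),\tau(t))\|\le 2$ reduces to checking, isoline by isoline, that matched features differ in height by at most $2$ isolines: valleys to valleys (difference $0$), peaks to peaks (difference $0$), and ridge $X_i$ to its chosen ridge $T_i$ or $F_i$ (difference at most $2$ by construction of the terrains), with the linearly interpolated slopes in between staying inside the band since both meshes are piecewise bilinear and the corner heights already satisfy the bound — invoking Lemma~\ref{thm:convex}, which tells us it suffices to check the cell corners.

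The one genuinely delicate point, and the step I expect to be the main obstacle, is the region near $p=0$ and the transitions between clause-blocks: there the elevation pattern is what forces the $k$-th valley/clause of $A$ onto the $k$-th of $B$, and I must confirm that the \emph{particular} piecewise-linear $\tau$ chosen above (which is not the identity — it compresses/stretches to hit the right row inside each block) still keeps the far-left column and the inter-block connector pieces within distance $2$. This amounts to checking that the height profile along $p=0$ of $A$, reparameterized by $\tau$, stays within $2$ of that of $B$; I would handle it by making the far-left column of both meshes essentially flat (or slowly varying) over each clause-block so that any monotone $\tau$ respecting the block correspondence is admissible there. Once that is in place, the remaining verifications are the routine corner-by-corner checks described above, and we conclude $D_{\mathrm{fd}}(A,B)\le 2$.
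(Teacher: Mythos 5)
Your overall strategy is the paper's: from a satisfying assignment, build $\pi$ (ridge $X_i\mapsto T_i$ if \textit{true}, $F_i$ if \textit{false}; valleys to valleys) and a $\tau$ that sends each clause's peak-timestamp of $A$ to one of the three peak-rows of that clause in $B$, then verify the bound corner by corner. However, your row selection is inverted, and this breaks the argument at precisely the point the construction is designed to test. The construction specifies that if the $k$-th disjunct of a clause is $X_i$ (resp.\ $\neg X_i$), then the $k$-th row of that clause is missing its peak on $F_i$ (resp.\ $T_i$). So when the satisfied disjunct is $X_i$, you have $X_i=\textit{true}$ and $\pi(X_i)=T_i$, and you must use the row whose missing peak is on $F_i$ --- i.e.\ the row indexed by that very disjunct --- so that $T_i$ still carries a peak onto which $A$'s peak on ridge $X_i$ can land. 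You instead propose the row ``whose missing peak is on $T_i$'': there $A$'s peak on $X_i$ is sent to a peakless stretch of $T_i$, at distance $3$, which is exactly the obstruction exploited in Lemma~\ref{thm:fd3}. (Worse, a row of that clause missing a peak on $T_i$ exists only if $\neg X_i$ also happens to be one of its disjuncts.) The fix is simply to take, for each clause, the row indexed by a satisfied disjunct; with that correction your verification plan coincides with the paper's proof.

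A smaller caveat: in your last paragraph you propose to ``make the far-left column essentially flat'' to ease the check near $p=0$. The lemma concerns the curves as constructed, so you are not free to redesign that column --- the elevation pattern there is exactly what forces clause-to-clause alignment in the lower-bound direction. What you need is to verify that the given pattern at $p=0$, under a $\tau$ that moves within each clause block to hit the chosen row, stays within $2$; the paper asserts this without detail (``the remaining parts of the curves can be matched with $\e=2$''), so your instinct that this is the delicate spot is reasonable, but the remedy must be a verification, not a modification of the gadget.
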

		\begin{proof}
		Consider a satisfying assignment to the 3-CNF formula.
		Match~$X_i$ with the center of~$F_i$ or~$T_i$, if~$X_i$ is \textit{false} or~\textit{true}, respectively.
		For every clause, the timestamp with peaks of~$A$ can be matched with a row of peaks on~$B$.
		As was already hinted at by~Figure~\ref{fig:terrainsNPC}, the remaining parts of the curves can be matched with~$\e=2$.
		Therefore this yields a \FD of at most~$2$.
		\end{proof}
		\begin{thm}\label{thm:fdApx}
		It is NP-hard to approximate the~\matchCC or~\matchCD~\FD for moving curves in~$\real^1$ within a factor~$1.5$.
		\end{thm}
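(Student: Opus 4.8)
The plan is to combine the two preceding lemmas with a padding argument that forces the reduction to be sound for both the \matchCC and the \matchCD class simultaneously. Given a 3-CNF formula $\varphi$ with $n$ variables and $m$ clauses, I would build the two quadrilateral meshes $A$ and $B$ over $[0,P]\times[0,T]$ exactly as described above: along the $p$-axis place the ridges $X_1,\dots,X_n$ on $A$ and the paired ridges $F_1,T_1,\dots,F_n,T_n$ on $B$, separated by valleys and peaks, and along the $t$-axis place one ``clause block'' per clause, with the anchoring elevation pattern at $p=0$ that forces the $k$-th clause block of $A$ to be matched to the $k$-th clause block of $B$. The construction has size polynomial in $n$ and $m$, and it is the same construction for both matching classes, so one reduction handles both cases.

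The core of the argument is the pair of implications already isolated as Lemma~\ref{thm:fd3} and Lemma~\ref{thm:fd2}. If $\varphi$ is satisfiable, Lemma~\ref{thm:fd2} gives a matching of \FD at most $2$; since a \matchIC (hence \matchID) matching is in particular a \matchCD matching, and the explicit matching built there realigns timestamps by a fixed $\tau$ and uses a constant positional reparameterization, the same witness works for both classes. If $\varphi$ is unsatisfiable, Lemma~\ref{thm:fd3} shows every matching has \FD at least $3$; here I should check that the proof of Lemma~\ref{thm:fd3} only used properties (peaks of $A$ in the $k$-th clause are matched to peaks within a single row of the $k$-th clause of $B$, each $X_i$ matched to $F_i$ or $T_i$, valleys matched to valleys) that hold for the most permissive class considered, \matchCD, so that the lower bound is valid there too. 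The key structural facts enabling this — that for $\e<3$ the $n$-th valley of $A$ must match the $n$-th valley of $B$, and that $X_i$ cannot be matched to both $F_i$ and $T_i$ because the red valley between them has distance $3$ from $X_i$ — are exactly the observations made before Lemma~\ref{thm:fd3}, and they are metric facts independent of whether $\pi$ is constant or time-varying.

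Putting the two lemmas together: a decision procedure that distinguishes \FD $\le 2$ from \FD $\ge 3$ would decide satisfiability of $\varphi$, so this decision problem is NP-hard. An algorithm approximating the \FD within a factor $1.5$ would in particular, on these instances, return a value in $[2,3)$ when the true value is $2$ and a value in $[3,4.5)$ when the true value is $\ge 3$; since $1.5\cdot 2 = 3$, such an algorithm separates the two cases, hence is NP-hard. This holds for moving curves embedded in $\real^1$, since the meshes $A$ and $B$ are height maps. The statement follows for both the \matchCC and \matchCD classes because both lemmas apply to each.

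I expect the main obstacle to be the careful verification that Lemma~\ref{thm:fd3}'s combinatorial extraction of a truth assignment is genuinely robust to time-varying positional reparameterizations $\pi_t$ in the \matchCD class: one must argue that even though $\pi_t$ may move the matched position of ridge $X_i$ continuously in $t$, the ridge still cannot ``jump'' from the $F_i$ side to the $T_i$ side without incurring \FD $\ge 3$ at some intermediate timestamp, because the intervening red valley persists across all timestamps of the clause block. This continuity-plus-connectivity argument — essentially that the set of timestamps where $X_i$ is matched to the $F_i$ side is clopen in the clause block and hence all-or-nothing — is the delicate point; the rest is routine bookkeeping about the construction's size and the $1.5$-factor arithmetic.
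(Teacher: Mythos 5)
Your proposal is correct and follows essentially the same route as the paper: the theorem is obtained directly by combining Lemma~\ref{thm:fd3} and Lemma~\ref{thm:fd2} to get the $2$-versus-$3$ gap on the same polynomial-size construction for both matching classes, which is exactly the paper's (one-line) proof. The additional care you devote to checking that the lower bound of Lemma~\ref{thm:fd3} survives time-varying reparameterizations $\pi_t$ is a worthwhile elaboration of what the paper establishes in the discussion preceding the lemmas, not a different argument.
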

		\begin{proof}
		By Lemmas~\ref{thm:fd3} and~\ref{thm:fd2}, the~\matchCC or~\matchCD~\FD between two quadrilateral meshes embedded in~$\real^1$ is at least~$3$ or at most~$2$, depending on whether a 3-CNF formula is satisfiable.
		\end{proof}

	\subsection{Orientation-Preserving Homeomorphisms}\label{sec:orient}

		\begin{wrapfigure}{l}{0pt}\homDD\end{wrapfigure}
		Previous results~\cite{surface,simpleHard} have shown that computing the \FD between surfaces under orientation-preserving homeomorphisms is NP-hard for surfaces embedded in~$\real^2$.
		We will refer to this variant as the~\emph{\matchDD}~\FD.
		The prior results hold for triangular meshes, which are a degenerate case of quadrilateral meshes.
		Although the \matchDD \FD is upper-semicomputable, it is unknown whether it is computable at all.
		We consider the case where the corners of the meshes are matched with each other.

		The prior NP-hardness constructions for the~\matchDD~\FD seem unnecessarily complex.
		Therefore we extend our results for the \matchCD~\FD of Section~\ref{sec:hardness} to obtain a new hardness construction for the \matchDD~\FD for surfaces embedded in~$\real^2$.
		Note that we cannot directly apply the previous construction because there was only a single matching of timestamps for~\matchCD matchings.
		Thus, we do not preserve the property that for a clause, the timestamp with peaks of~$A$ maps to a single timestamp of~$B$.

		We prevent this by means of a second dimension in which we enforce that a row of peaks of~$A$ maps to a single row of peaks of~$B$.
		In addition to the embedding in~$\real^1$ of Figure~\ref{fig:terrainsNPC}, which defines one coordinate for every point on a clause of a quadrilateral mesh, we define a second coordinate using Figure~\ref{fig:terrainsNPC2}, yielding an embedding in~$\real^2$.

		\begin{figure}[h!]\centering%
			\begin{minipage}{0.85\textwidth}
				\includegraphics[scale=.9]{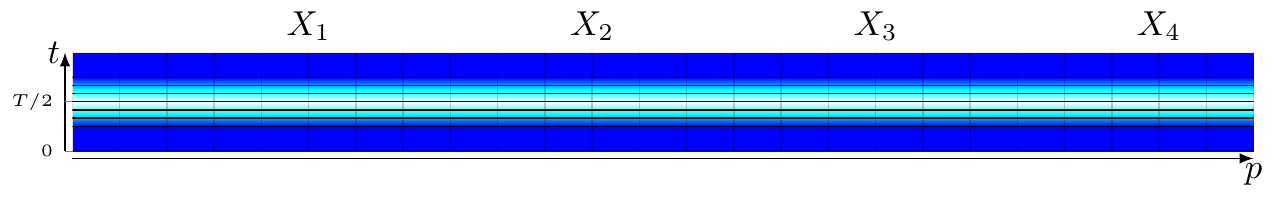}\\
				\includegraphics[scale=.9]{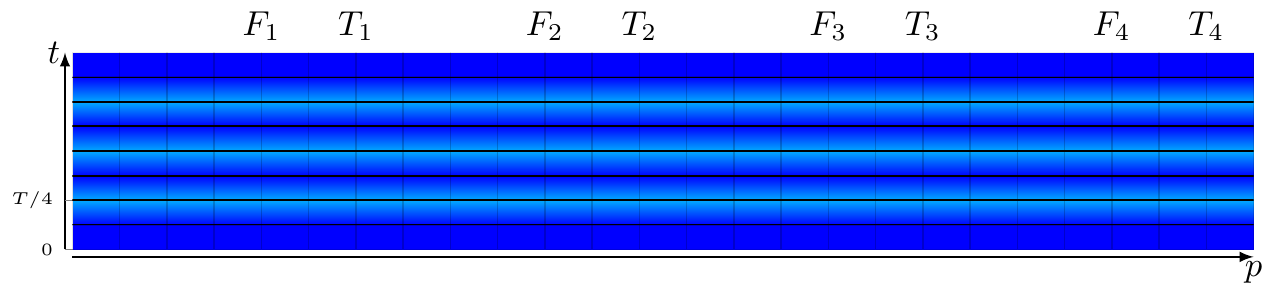}
			\end{minipage}%
			\begin{minipage}{0.05\textwidth}
				\includegraphics[scale=.9]{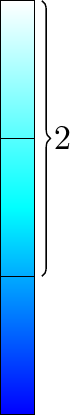}
			\end{minipage}
			\caption{The second coordinate of a clause gadget of~$A$ (top) and~$B$ (bottom) embedded in~$\real^2$, the first coordinate is given in Figure~\ref{fig:terrainsNPC}.\label{fig:terrainsNPC2}}
		\end{figure}

		Now, under the maximum norm ($L^\infty$), a row of peaks of~$A$ can only be matched to a single row of peaks on~$B$ for~$\e<3$.
		Conversely, we can still match the meshes of two satisfiable formulas with~$\e=2$.
		Hence, Theorem~\ref{thm:fdOrient} follows.
		This result extends to triangular meshes since all quadrilaterals lie in the plane, and can thus be represented by a pair of triangles.
		For norms other than the maximum norm, the problem is still NP-hard, but our bound on the approximation factor is smaller than~$1.5$.

		\begin{thm}\label{thm:fdOrient}
		Unless~P=NP, no polynomial time algorithm can approximate the \matchDD \FD between two quadrilateral meshes embedded in~$\real^2$ under the maximum norm within a factor~$1.5$.
		\end{thm}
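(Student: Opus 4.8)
The plan is to build on the two-coordinate construction described just above the theorem statement and show that it realizes the same satisfiability dichotomy as in Lemmas~\ref{thm:fd3} and~\ref{thm:fd2}, but now for arbitrary orientation-preserving homeomorphisms with aligned corners. First I would fix notation: the moving curves $A$ and $B$ are quadrilateral meshes over $[0,P]\times[0,T]$ whose first coordinate is the height map of Figure~\ref{fig:terrainsNPC} (the $\real^1$ construction used for the \matchCD hardness) and whose second coordinate is the height map of Figure~\ref{fig:terrainsNPC2}. The key point of the second coordinate is that within a clause gadget it is a function of $t$ alone (peaks form full rows), so under the $L^\infty$ norm any matching with $\e<3$ must send each peak-row of $A$ to a single peak-row of $B$ — this is exactly the property that the \matchCD construction got for free from having a single $\tau$, and which a general homeomorphism could otherwise destroy. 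I would state this as a preliminary claim and prove it by the same valley-separation argument already used in Section~\ref{sec:hardness}: the second coordinate has a valley of height difference $3$ separating consecutive peak-rows, so a point matched across that valley incurs cost $\ge 3$ in the second coordinate, hence in the $L^\infty$ norm.

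Next I would run the reduction. Given a 3-CNF formula $\varphi$ with $n$ variables and $m$ clauses, assemble $A$ and $B$ from $m$ clause gadgets stacked along the $t$-axis (with the far-left elevation pattern forcing the $k$-th clause of $A$ to match the $k$-th clause of $B$, as in Section~\ref{sec:hardness}) and $n$ variable ridges along the $p$-axis. For the unsatisfiable direction I would reprove Lemma~\ref{thm:fd3} in the new setting: a matching with \FD $<3$ forces, via the first coordinate, each ridge $X_i$ to match $F_i$ or $T_i$ and each peak of $A$ to match a peak of $B$; via the preliminary claim on the second coordinate, the peak-row of $A$ for clause $k$ maps to a single peak-row of $B$ for clause $k$; reading off the truth assignment from which of $F_i,T_i$ each $X_i$ matches then yields a satisfying assignment, contradiction. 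For the satisfiable direction I would extend Lemma~\ref{thm:fd2}: given a satisfying assignment, match $X_i$ to the center of $F_i$ or $T_i$ accordingly, match each clause's peak-row of $A$ to the appropriate peak-row of $B$, and extend to an orientation-preserving homeomorphism of $[0,P]\times[0,T]$ with aligned corners achieving $\e=2$ in both coordinates simultaneously (the first coordinate as in Figure~\ref{fig:terrainsNPC}, the second as in Figure~\ref{fig:terrainsNPC2}). Combining the two directions gives a \FD of at least $3$ or at most $2$ according to whether $\varphi$ is satisfiable, so a polynomial-time $1.5$-approximation would decide 3-SAT.

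I would close by remarking, as the statement's surrounding text already does, that since every quadrilateral in these meshes lies in a plane it splits into two triangles, so the hardness also holds for triangular meshes; and that for norms other than $L^\infty$ the same construction works but the ratio between the "unsatisfiable" and "satisfiable" distances (and hence the inapproximability factor) degrades below $1.5$.

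The main obstacle I expect is the satisfiable direction: producing an honest orientation-preserving homeomorphism of the full parameter space — not just a collection of compatible partial correspondences on ridges and peak-rows — that respects corner alignment and simultaneously keeps \emph{both} coordinate differences bounded by $2$ everywhere, including on the valley regions between ridges and between clauses. This requires checking that the freespace $\F_{\e=2}$ (of which Figure~\ref{fig:FSDterrainsNPC} shows representative 2D slices) actually admits a global monotone surface realizing the chosen assignment, i.e. that the local choices at the ridges can be interpolated across the intervening terrain without the height maps of $A$ and $B$ ever differing by more than $2$; this is the analogue of the ``remaining parts of the curves can be matched with $\e=2$'' step in Lemma~\ref{thm:fd2}, now carried out in two coordinates and for a general homeomorphism rather than one of product form, and it is where the geometry of the gadgets in Figures~\ref{fig:terrainsNPC} and~\ref{fig:terrainsNPC2} must be verified to cooperate.
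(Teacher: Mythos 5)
Your proposal follows essentially the same route as the paper, which proves Theorem~\ref{thm:fdOrient} only by the surrounding discussion: add the second coordinate of Figure~\ref{fig:terrainsNPC2} so that under the $L^\infty$ norm any matching with $\e<3$ sends a peak row of $A$ to a single peak row of $B$, and then reuse the dichotomy of Lemmas~\ref{thm:fd3} and~\ref{thm:fd2}. The obstacle you flag for the satisfiable direction (exhibiting a genuine orientation-preserving homeomorphism achieving $\e=2$ in both coordinates) is real but is also left unverified in the paper, which simply asserts that the meshes of satisfiable formulas can still be matched with $\e=2$; your plan is, if anything, more explicit about what remains to be checked.
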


	\section{Conclusion}

		Based on the~\FD, we presented several similarity measures between moving curves, together with efficient algorithms for computing some measures, while proving NP-hardness for computing others.

		Although many algorithmic solutions to variants of the \FD between static curves also apply to the \matchIC~\FD, extending the \matchID class is more complex.
		For example, computing the \matchID~\FD between moving closed curves (with a cylinder instead of square as parameter space) remains an open problem.

		Finally, the performance of the algorithm for the \matchID~\FD seems undesirably slow and we have not been able to prove tight lower bounds on this running time yet.

\mypar{Acknowledgements} K. Buchin, T. Ophelders, and B. Speckmann are supported by the Netherlands Organisation for Scientific Research (NWO) under project no.
612.001.207 (K. Buchin) and no. 639.023.208 (T. Ophelders \& B. Speckmann).

\bibliographystyle{abbrv}
\bibliography{ref}
\end{document}